\documentclass{article}
\usepackage{times}
\usepackage{soul}
\usepackage{url}
\usepackage[hidelinks]{hyperref}
\hypersetup{
	colorlinks=true,
	linkcolor=blue,
	filecolor=green,  
	citecolor=magenta
}
\usepackage[utf8]{inputenc}
\usepackage[T1]{fontenc}
\usepackage[small]{caption}
\usepackage{graphicx}
\usepackage{amsmath}
\usepackage{amsthm}
\usepackage{booktabs}
\usepackage{algorithm}
\usepackage{algorithmic}
\usepackage[switch]{lineno}
\usepackage{makecell}
\usepackage{array,xspace,hhline,tikz,tabularx,amsmath,amsfonts,amsthm,amssymb}
\usepackage{pgflibraryshapes}
\usepackage{pgfplots}
\usepackage{pgfplotstable}
\pgfplotsset{compat=1.7}
\usepgfplotslibrary{fillbetween}
\usepackage{caption}
\usepackage{subcaption}
\usepackage{pgf-pie}
\usepackage{nicefrac}

\usepackage{booktabs}
\usepackage{microtype}
\usepackage[numbers,sort]{natbib}
\usepackage{cleveref}
\usepackage{thm-restate}
\usepackage[shortlabels]{enumitem}

\usepackage{comment}
\usepackage{fullpage}
\allowdisplaybreaks

\newcommand\blfootnote[1]{%
  \begingroup
  \renewcommand\thefootnote{}%
  \NoHyper\footnote{#1}\endNoHyper
  \addtocounter{footnote}{-1}%
  \endgroup
}

\newcounter{remarkcount}
\crefname{remarkcount}{Remark}{Remarks}
\crefname{remark}{Remark}{Remarks}

\newtheorem{definition}{Definition}

\newtheorem{theorem}{Theorem}

\newcommand{\nash}{$\mathsf{NASH}$\xspace}
\newcommand{\cut}{$\mathsf{CUT}$\xspace}
\newcommand{\util}{$\mathsf{UTIL}$\xspace}
\newcommand{\egal}{$\mathsf{EGAL}$\xspace}
\newcommand{\fut}{$\mathsf{FUT}$\xspace}
\newcommand{\map}{$\mathsf{MP}$\xspace}

\begin{document}

	\date{}
	\title{Approximate Strategyproofness in Approval-based Budget Division}

\author{
	Haris Aziz$^\ast$
	\quad
	Patrick Lederer$^\dagger$
	\quad
	Jeremy Vollen$^\ddagger$
}

	\maketitle
\blfootnote{$^\ast$UNSW Sydney. Email: \texttt{haris.aziz@unsw.edu.au}}
\blfootnote{$^\dagger$ILLC, University of Amsterdam. Email: \texttt{p.lederer@uva.nl}}
\blfootnote{$^\ddagger$Northwestern University. Email: \texttt{vollen@northwestern.edu}}

	\begin{abstract}
		In approval-based budget division, the task is to allocate a divisible resource to the candidates based on the voters' approval preferences over the candidates. For this setting, \citet{BBPS21a} have shown that no distribution rule can be strategyproof, efficient, and fair at the same time. In this paper, we aim to circumvent this impossibility theorem by focusing on approximate strategyproofness. To this end, we analyze the incentive ratio of distribution rules, which quantifies the maximum multiplicative utility gain of a voter by manipulating. While it turns out that several classical rules have a large incentive ratio, we prove that the Nash product rule (\nash) has an incentive ratio of $2$, thereby demonstrating that we can bypass the impossibility of \citeauthor{BBPS21a} by relaxing strategyproofness. Moreover, we show that an incentive ratio of $2$ is optimal subject to some of the fairness and efficiency properties of \nash, and that the positive result for the Nash product rule even holds when voters may report arbitrary concave utility functions. Finally, we complement our results with an experimental~analysis.
	\end{abstract}

	\section{Introduction}
	
	In many applications of collective decision-making, the goal is to distribute a divisible resource to some candidates based on the voters' preferences over the candidates. For instance, this task arises in (fractional) participatory budgeting \citep{AzSh21a,RSM25a}, where a budget needs to be distributed to projects based on the voters' preferences over the projects, and donor coordination \citep{BBP+19a,BGSS23a}, where monetary donations are allocated to charities based on the donors' preferences. In this paper, we will study such budget division problems based on approval preferences, i.e., voters only distinguish between liked and disliked candidates. This model, which we refer to as \emph{approval-based budget division}, has been introduced by \citet{BMS05a}. Moreover, this setting has attained significant attention \citep[e.g.,][]{ABM20a,ALL+25b,MPS20a,BBPS21a,KrPe25b} because approval ballots offer a favorable tradeoff between the cognitive burden for voters and the expressive power of the ballot format \citep{FBG23a,YHPP24a}. Formally, we will thus study \emph{distribution rules}, which map each approval profile to a distribution of the resource to the candidates.
	
	The literature on such distribution rules \citep[e.g.,][]{ABM20a,BBPS21a} has identified three fundamental properties known as \emph{strategyproofness}, \emph{efficiency}, and \emph{fairness}. Roughly, strategyproofness requires that voters cannot benefit by lying about their true preferences. Or, put differently, this property ensures that it is in the best interest of the voters to reveal their preferences truthfully. 
	Efficiency, on the other hand, states that we need to choose a distribution that is on the Pareto frontier, i.e., there is no other distribution that makes all voters weakly better off and some voters strictly better off. Lastly, fairness demands that the chosen distribution should be fair to all groups of voters: an appropriate amount of the resource should be spent on behalf of each group of voters. This idea has led to numerous fairness notions such as average fair share or the core, which postulate lower bounds on the utilities of (groups of) voters \citep[see, e.g.,][]{ABM20a}.
	
	Unfortunately, \citet{BBPS21a} have shown in a sweeping impossibility theorem that these three desiderata cannot be jointly satisfied: every strategyproof and efficient distribution rule violates even minimal fairness conditions. Moreover, while there are reasonable strategyproof distribution rules, such as the utilitarian rule and the conditional utilitarian rule, these rules severely violate efficiency or fairness conditions \citep{ABM20a,ALL+25b}. On the other hand, nothing is known about how severely fair and efficient rules fail strategyproofness, thereby leaving it open whether we can escape the impossibility theorem of \citeauthor{BBPS21a} by relaxing strategyproofness. 
	
	\paragraph{Contribution.} In this paper, we investigate whether there are distribution rules that are efficient, fair, and approximately strategyproof. To this end, we introduce the \emph{incentive ratio} of a distribution rule as the worst-case ratio between the utility of a voter when manipulating and when voting honestly. Put differently, if a rule has an incentive ratio of $\gamma$, a voter can get at most $\gamma$ times her original utility by manipulating. Hence, the lower the incentive ratio, the less manipulable a rule is, and an incentive ratio of $1$ means that the rule is strategyproof. The incentive ratio has previously been studied in private good settings, such as Fisher markets~\citep[e.g.,][]{CDZ11a,CDZZ12a,CDT+22a} and assignment problems \citep[e.g.,][]{HWWZ24a,BTWY25a}, where it led to several positive results: desirable, manipulable mechanisms often have a low incentive ratio. In practice, this may suffice to prohibit manipulations as the cost of a manipulation (e.g., to learn the other voters' preferences or to compute a beneficial manipulation) may exceed its benefit.
	
	Unfortunately, it turns out that many distribution rules have a large incentive ratio. Specifically, we prove that the incentive ratios of the fair utilitarian rule (\fut) by \citet{BMS02a}, the maximum payment rule (\map) by \citet{ALL+25b}, and the egalitarian rule (\egal) grow linearly in the numbers of voters or candidates, thereby demonstrating that these rules are severely manipulable. 
	By contrast, we show that the Nash product rule (\nash) has an incentive ratio of~$2$. Since this rule satisfies efficiency and demanding fairness axioms, this result answers our question about the existence of fair, efficient, and approximately strategyproof distribution rules in the positive. Moreover, we show that \nash achieves its incentive ratio of $2$ even when voters are allowed to report concave utility functions over the distributions. Thus, by relaxing strategyproofness, we circumvent the impossibility theorem of \citet{BBPS21a} for a much richer class of utility functions than approval utilities.

	Motivated by this positive result, we further explore whether there are distribution rules that have an incentive ratio smaller than $2$ and satisfy some of the desirable properties of \nash. Perhaps surprisingly, it turns out that, under various side conditions, the incentive ratio of \nash is optimal. In more detail, we show for the following classes that all corresponding rules have an incentive ratio of at least $2$: \emph{(i)} distribution rules that satisfies average fair share, one of the fairness conditions of \nash; \emph{(ii)} distribution rules that maximize an additively separable and strictly concave welfare function; and \emph{(iii)} distribution rules that satisfy efficiency, a mild fairness axiom called group fair share, and three basic side conditions. These results demonstrate that, subject to its desirable properties, no distribution rule has a lower incentive ratio than \nash. 
	
	Lastly, we experimentally analyze the manipulability of distribution rules. Specifically, we
	measure the average incentive ratio and the percentage of manipulable profiles of \fut, \map, \egal, and \nash for approval profiles sampled from several distributions. These experiments show that all four rules are frequently manipulable, with \nash and \egal allowing for manipulations in almost all considered profiles. Further, with the exception of \fut, our theoretical results on the incentive ratio match the trends observed in the experiments. That is, while \nash is effectively always manipulable, voters never gain substantially by lying about their preferences, whereas \egal and \map allow for much more beneficial manipulations. The outlier to this trend is \fut, whose average incentive ratio is much lower than our theoretical worst-case result indicates.
	
	\paragraph{Related Work.}\label{subsec:relatedwork}
	The work on approval-based budget division was initiated by \citet{BMS02a,BMS05a} and we refer to the survey by \citet{SuTe26a} for an overview of this topic as well as related models. More specifically, our paper directly builds on the work of \citet{BBPS21a}, who have shown that every strategyproof and efficient distribution rule violates even minimal fairness conditions. This impossibility theorem strengthens prior results by \citet{BMS05a} and \citet{Dudd15a}. In light of these negative results, researchers have also examined how well distribution rules approximate efficiency and fairness properties. Specifically, \citet{ABM20a} have proven that two strategyproof rules are severely inefficient, whereas \citet{ALL+25b} have shown that most rules severely fail a fairness notion called average fair share. We refer to \Cref{fig:rulesoverview} for an overview of these results. 
	
	\begin{figure}
		\centering
		\renewcommand{\arraystretch}{0.75}
		\setlength{\tabcolsep}{0.5em}	
		\setcellgapes{1.75pt}   
		\makegapedcells
		\begin{tabular}{cccc}
			\toprule
			Rule & Strategyproofness & Efficiency & Fairness (AFS)\\
			\midrule
			\util & $1^\dagger$ & $1^\dagger$ & \makecell{$\infty^\dagger$}\\
			\cut & $1^\dagger$ & $\mathcal{O}(\sqrt[3]{n})^\dagger$ & \makecell{$\Theta(n)^\ddagger$\\(GFS)$^\dagger$}\\
			\nash & \textcolor{blue!80!black}{${2}$} & $1^\dagger$ & \makecell{$1^\dagger$} \\
			\egal & \makecell{\textcolor{blue!80!black}{$\Theta(m)$}\\(Excl. strategypr.)$^\dagger$} & $1^\dagger$ & \makecell{$\Theta(m)^\dagger$\\(IFS)$^\dagger$}\\
			\fut & \makecell{\textcolor{blue!80!black}{$\Theta(n)$}\\ (Monotonicity)$^\ast$} & $1^\ast$ & \makecell{$\Theta(n)^\ddagger$\\ (GFS)$^\ast$}\\
			\map & \makecell{\textcolor{blue!80!black}{$\Theta(n)$}\\ (Monotonicity)$^\ddagger$} & $\Theta(\log n)^\ddagger$ & \makecell{$2^\ddagger$\\ (GFS)$^\ddagger$}\\
			\bottomrule
		\end{tabular}
		\caption{Approximation ratios of common distribution rules to strategyproofness, efficiency, and fairness. In each row, we show for the given rule its approximation ratio to strategyproofness (i.e., its incentive ratio), efficiency, and fairness. Approximate efficiency and fairness are measured by the minimal values $\alpha$ and $\beta$ such that, when multiplying the utilities of all voters under the given rule by $\alpha$ (resp. $\beta$), efficiency (resp. AFS) is guaranteed to be satisfied. 
			If a rule satisfies an axiomatic weakening of strategyproofness or AFS, this is indicated in brackets. 
			Our results are marked in blue. Results marked by~a dagger ($\dagger$), asterisk ($\ast$), and double dagger ($\ddagger$) can be found in \citet{ABM20a}, \citet{BBPS21a}, and \citet{ALL+25b}, respectively. These references also provide the definitions of all rules and axioms in the table, most of which are recapped in  \Cref{sec:preliminaries}.
		}
		\label{fig:rulesoverview}
	\end{figure}
	
	We note that our results give a strong argument in favor of \nash as its incentive ratio is optimal subject to its fairness guarantees. Prior work has already demonstrated that this rule is appealing from other perspectives. For example, \citet{ABM20a} showed that \nash satisfies two demanding fairness notions called average fair share and the core, and \citet{BBP+19a} proved that it gives strong participation incentives. Further, \citet{GuNe14a} axiomatically characterized \nash based on a mild fairness notion, efficiency, and several auxiliary conditions. It is also known that, among fair rules, \nash gives almost optimal guarantees regarding the utilitarian social welfare \citep{MPS20a}.
	
	Finally, the incentive ratio has been studied before. Specifically, it was first introduced for Fisher markets, where goods are sold to agents \citep{CDZ11a,CDZZ12a,CDTZ16a,CDT+22a}. For example, for linear utilities, the incentive ratio of these markets is $2$.
	Moreover, the incentive ratio has also been analyzed for rank aggregation \citep{EbLe26d} and assignment mechanisms, such as the probabilistic serial rule \citep{HWWZ24a, LSX24a}, picking sequences \citep{XiLi20a}, and cake cutting \citep{BTWY25a}.
	\section{Preliminaries}\label{sec:preliminaries}
	
	For each integer $t\in \mathbb{N}$, we define $[t]=\{1,\dots, t\}$. We assume there is a set of $n$ voters $N=[n]$ and a set of $m$ candidates $C=\{x_1,\dots,x_m\}$. All voters have \emph{dichotomous preferences} over the candidates, i.e., they only distinguish between liked and disliked candidates. To indicate her preferences, each voter $i\in N$ reports an \emph{approval ballot} $A_i\subseteq C$, which is the non-empty set of her approved candidates. Further, an \emph{approval profile} $A=(A_i)_{i\in N}$ is a vector specifying the approval ballot of each voter $i\in N$. Given an approval profile $A$, we denote by $A_{-i}=(A_1,\dots, A_{i-1}, A_{i+1},\dots, A_n)$ the profile derived from $A$ by removing voter $i$ and by $(A_{-i},A_i')=(A_1,\dots, A_{i-1}, A_i', A_{i+1},\dots, A_n)$ the profile where we change the approval ballot of voter $i$ to $A_i'$. 
	
	Given an approval profile, our goal is to distribute a continuous resource to the candidates. We will formalize this with the help of \emph{distributions}, which are functions $p:C\rightarrow [0,1]$ such that $\sum_{x\in C} p(x)=1$. Further, the set of all distributions over the set of candidates $C$ is denoted by $\Delta(C)$. Less formally, a distribution $p$ specifies for every candidate $x\in C$ the share $p(x)$ of the resource that is allocated to $x$. To select a distribution for an input profile, we will use \emph{(distribution) rules}, which map each approval profile $A$ (over any finite sets of voters $N$ and candidates $C$) to a distribution $p\in \Delta(C)$. We denote by $f(A,x)$ the share of the resource that a rule $f$ assigns to a candidate $x$ in a profile~$A$. 
	
	Finally, we assume that the utility of every voter $i\in N$ for a distribution $p$ is the total share that $p$ assigns to the approved candidates of the voter, i.e., $u_i(p)=\sum_{x\in A_i} p(x)$.
	
	\subsection{Fairness and Efficiency}\label{subsec:eff+fair}
	
	Since we aim to find fair and efficient rules that are approximately strategyproof, we next introduce the standard notion of (Pareto) efficiency and three fairness axioms. We note that we only define these conditions for distributions; a distribution rule satisfies a given axiom if, for every profile, its chosen distribution satisfies the axiom. 
	
	\paragraph{Efficiency.} Efficiency requires that it is impossible to make some voter better off without making any other voter worse off. Formally, a distribution $p$ is \emph{(Pareto) efficient} for a profile $A$ if there is no other distribution $q$ such that $u_i(q)\geq u_i(p)$ for all $i\in N$ and $u_i(q)>u_i(p)$ for some~$i\in N$.
	
	\paragraph{Positive share.} Positive share is a minimal fairness axiom demanding that each voter gets non-zero utility. Hence, a distribution $p$ satisfies \emph{positive share (PS)} for a profile $A$ if $u_i(p)>0$ for all $i\in N$. 
	
	\paragraph{Group fair share.} The idea of group fair share is that, for each group of voters $S$, we should spend a share proportional to the size of $S$ on the approved candidates of the group. More formally, a distribution $p$ satisfies \emph{group fair share (GFS)} for a profile $A$ if, for all groups of voters $S\subseteq N$, it holds that $\sum_{x\in \bigcup_{i\in S} A_i} p(x)\geq \frac{|S|}{n}$. GFS is a fairly mild fairness condition which is satisfied by many known rules. 
	
	\paragraph{Average fair share.} Average fair share requires that, if all voters in a group $S$ approve a common candidate, then the average utility of the group $S$ should be at least proportional to the size of $S$. Put differently, a distribution $p$ satisfies \emph{average fair share (AFS)} for a profile $A$ if $\frac{1}{|S|} \sum_{i\in S} u_i(p)\geq \frac{|S|}{n}$ for all groups of voters $S\subseteq N$ with $\bigcap_{i\in S} A_i\neq \emptyset$. Among all commonly studied rules, only the Nash product rule satisfies~AFS. 
	
	\subsection{Strategyproofness and Incentive Ratio}\label{subsec:IR}
	
	We now turn to the central properties of this paper, namely strategyproofness and the incentive ratio of distribution rules. First, strategyproofness requires that voters cannot increase their utility by misreporting their true approval ballots. More formally, a distribution rule $f$ is \emph{strategyproof} if $u_i(f(A))\geq u_i(f(A_{-i}, A_i'))$ for all voters $i\in N$, profiles $A$, and approval ballots~$A_i'$. Conversely, a rule is said to be \emph{manipulable} if it is not strategyproof. Strategyproofness is desirable because, for manipulable rules, we cannot expect voters to report their true preferences, which may lead to suboptimal outcomes. 
	
	Unfortunately, strategyproofness is a rather prohibitive axiom for approval-based budget division. Specifically, \citet{BBPS21a} have shown that every strategyproof and efficient rule fails even the minimal condition of positive share. 
	
	\begin{theorem}[\citeauthor{BBPS21a}, \citeyear{BBPS21a}]\label{thm:BBPS}
		No strategyproof distribution rule satisfies both efficiency and positive~share.
	\end{theorem}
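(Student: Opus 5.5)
Since a distribution rule is defined on all finite sets of voters and candidates, it suffices to exhibit a single choice of $N$ and $C$, together with a finite family of approval profiles over them, on which the three axioms cannot hold simultaneously. The plan is to assume that $f$ is strategyproof, efficient and satisfies positive share, and to derive a chain of linear constraints on the shares $f(A,x)$ on a small family of profiles until they become infeasible. I would aim for very few candidates and voters, e.g.\ $m=4$ candidates $a,b,c,d$ and three or four voters.

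\textbf{Step 1: efficiency.} For each profile in the family, I first work out what efficiency forces. A candidate $x$ is dominated by $y$ if every voter approving $x$ also approves $y$, and strictly so if some voter approves $y$ but not $x$. Efficiency then requires $p(x)=0$ for such $x$. More generally, the efficient distributions are those whose support admits no improving transfer. This pins the outcome to a low-dimensional face of the simplex. Positive share then adds strict lower bounds of the form $u_i(f(A))>0$. Typical examples:
\begin{itemize}
\item $A_1=\{a,b\}$, $A_2=\{b,c\}$: efficiency forces everything on $b$;
\item two voters with disjoint ballots: the efficient distributions are exactly those spending nothing outside $A_1\cup A_2$.
\end{itemize}

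\textbf{Step 2: strategyproofness.} Strategyproofness contributes, for every pair of profiles $A$ and $(A_{-i},A_i')$, both inequalities
\[
u_i(f(A))\ge u_i(f(A_{-i},A_i')) \quad\text{(truthful ballot } A_i\text{)}
\]
and the symmetric one with respect to ballot $A_i'$. When $A_i'\subset A_i$ or the ballots overlap, these two inequalities together usually give equalities or tight bounds on how the share on $A_i\cap A_i'$ may move.

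The key mechanism is the following. In a profile where efficiency concentrates all mass on candidates approved by voter $i$ only if $i$ misreports, positive share for some other voter forces mass away from $i$'s true ballot. Strategyproofness then propagates this loss back to the truthful profile. There it conflicts with efficiency, which would require moving that mass to a candidate that everyone weakly prefers.

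\textbf{Step 3: choosing the family.} I would start from a cyclic-type profile such as
\[
A_1=\{a,b\},\quad A_2=\{b,c\},\quad A_3=\{c,d\},\quad A_4=\{d,a\}.
\]
From there I would consider the single-voter deviations that shrink a ballot to a singleton or enlarge it by one candidate, and chain them together.

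The main obstacle is precisely finding a family of profiles whose constraints are jointly infeasible. Hand search is error-prone, so the first thing I would try is to encode all profiles with at most $4$ voters and $4$ candidates as a linear program. The variables would be $f(A,x)$; the constraints would be the strategyproofness inequalities, efficiency (as support restrictions, after precomputing for each profile which supports admit efficient distributions), and positive share written as $u_i\ge\varepsilon$. I would check infeasibility and extract a minimal infeasible subsystem, or a Farkas certificate, which gives a human-readable proof over a handful of profiles.

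Two caveats remain. Efficiency is not a single convex constraint in general, so I may need to branch on the support of $f(A)$ in a few profiles, treating each branch as a separate LP. If $m=4$ turns out to be feasible, I would enlarge to $m=5$ or $n=5$ and repeat.
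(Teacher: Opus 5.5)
\emph{Note: I do not have the text of \citet{BBPS21a}, so the details I give about their proof are from memory of that paper, not from the paper under review.}

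The paper does not prove this statement itself. It imports it from \citet{BBPS21a}, where the impossibility was found by a computer-aided, SAT-based search; the resulting certificate reasons about several hundred profiles. Your plan is in the same spirit, and your preliminary observations are correct. A single instance $(N,C)$ suffices. Efficiency of $p$ depends only on $\mathrm{supp}(p)$, so the efficient set is a union of faces of the simplex.

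But the proposal is not a proof. The one thing that would establish the theorem is a concrete family of profiles together with a chain of strategyproofness, efficiency and positive-share deductions, or a verified infeasibility certificate, ending in a contradiction. That is exactly what you defer to a search you have not carried out. Nothing you write shows that such a family exists with $m=4$ and three or four voters, and the size genuinely matters. For $n=2$ the three axioms are compatible: put all mass uniformly on $A_1\cap A_2$ if it is non-empty, and otherwise spend half the budget uniformly on each ballot. Your expectations that a ``handful'' of profiles suffices, and that you need to branch on supports only ``in a few profiles'', are unsupported. The combinatorial branching over efficient supports across many profiles is why such proofs are found with SAT/SMT solvers rather than with a single LP. The ``key mechanism'' in Step~2 is a heuristic, not an argument.

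There are also concrete problems with your encoding and your starting point.
\begin{itemize}
\item Positive share is a strict inequality. Infeasibility of the LP with $u_i\geq\varepsilon$ for one fixed $\varepsilon>0$ says nothing about smaller $\varepsilon$. You must treat $\varepsilon$ as a variable and show that its maximum is $0$. Alternatively, use that positive share is itself a support condition ($u_i(p)>0$ iff $\mathrm{supp}(p)\cap A_i\neq\emptyset$), encode it together with efficiency as Boolean constraints on supports, and handle strict positivity on the chosen support with a maximized slack.
\item Your anchor profile $\{a,b\},\{b,c\},\{c,d\},\{d,a\}$ is one where efficiency is vacuous. Every candidate has exactly two approvers, so $\sum_{i} u_i(p)=2$ for every $p$, and no distribution is Pareto-dominated.
\end{itemize}
To turn this into a proof you would have to actually produce the infeasible subsystem, at whatever size it first exists, and present it either as a checkable certificate or as a human-readable case analysis. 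That is the entire content of the original result.
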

	
	To circumvent this impossibility theorem, we will consider an approximate variant of strategyproofness. In more detail, we will analyze the incentive ratio of distribution rules, which quantifies the worst-case multiplicative utility gain of a voter.
	
	\begin{definition}[Incentive Ratio]
		The \emph{incentive ratio} of a distribution rule $f$ is defined by $\gamma(f)=\sup_{A, i, A_i'} \frac{u_i(f(A_{-i}, A_i'))}{u_i(f(A))}$.
	\end{definition}
	
	Put differently, if a distribution rule has an incentive ratio of $\gamma$, a voter can get at most $\gamma$ times her original utility by manipulating. Since the utility of a voter can be $0$ in the truthful profile $A$, we use the convention that $\frac{0}{0}=1$ and $\frac{x}{0}=\infty$ for all $x>0$. Further, we note that a voting rule has an incentive ratio of $1$ if and only if it is strategyproof, and a smaller incentive ratio means that a rule is less manipulable. We refer to \Cref{fig:example} for an example of the incentive ratio and an illustration of the rules defined in the next section.
	
	\subsection{Distribution Rules}\label{subsec:rules}
	
	We next introduce the four distribution rules that we will analyze in this paper. We note that all of the following rules fail strategyproofness but are efficient (except for \map) and satisfy varying degrees of fairness: \egal satisfies PS, \fut and \map satisfy GFS, and \nash even satisfies AFS.
	
	\paragraph{Nash product rule.} The Nash product rule (\nash) returns a distribution $p$ that maximizes the Nash welfare $\prod_{i\in N} u_i(p)$. While there can be multiple such distributions, all of them induce the same utility vector $(u_i(p))_{i\in N}$, so we do not need to worry about the tie-breaking for our purposes.
	
	\paragraph{Egalitarian rule.} The egalitarian rule (\egal) lexicographically maximizes the minimal utility of a voter: we first maximize the utility of the worst-off voter, subject to that we maximize the utility of the second worst-off voter, and so on. 
	To make this more formal, let $u^{\min}(p)$ denote a vector that contains the utilities $u_i(p)$ of all voters $i\in N$ in increasing order. 
	Then, \egal chooses a distribution $p$ that lexicographically maximizes $u^{\min}$, i.e., for every other distribution~$q$, it either holds that $u^{\min}(p)=u^{\min}(q)$ or there is an index $\ell\in [n]$ such that $u^{\min}(p)_i=u^{\min}(q)_i$ for all $i\in [\ell-1]$ and $u^{\min}(p)_\ell>u^{\min}(q)_\ell$. 
	Just as for \nash, there can be multiple distributions that lexicographically optimize $u^{\min}$, but all of them induce the same utility vector $(u_i(p))_{i\in N}$.
	
	\paragraph{Maximum payment rule.} The maximum payment rule (\map) was introduced by \citet{SpGe19a} and recently studied for budget division by \citet{ALL+25b}. It computes the outcome distribution greedily and first gives to each voter control over a $\frac{1}{n}$ share of the budget. Then, it repeatedly identifies the most approved candidate $x$ (with ties broken lexicographically), assigns the shares of the voters that approve $x$ to this candidate, and deletes $x$ as well as all voters that approve $x$ from the profile. This process is repeated until each voter has allocated her $\frac{1}{n}$ share and the final distribution corresponds to the total shares accumulated by each candidate.
	
	\paragraph{Fair utilitarian rule.} The fair utilitarian rule (\fut) was first suggested by \citet{BMS02a} and later rediscovered by \citet{BBPS21a}. In this rule, each voter $i\in N$ controls a share of $\frac{1}{n}$ and is assigned a weight $\lambda_i$, which is initially equal to $1$. Moreover, we let $t=\max_{x\in C} |\{i\in N\colon x\in A_i\}|$ denote the maximum approval score of a candidate. Then, \fut repeats the following routine until all voters have spent their share: we first increase the weights $\lambda_i$ of all voters who have not spent their share yet until the set of candidates $X$ with $\sum_{i\in N:x\in A_i} \lambda_i=t$ is non-empty. Note that in the first round, we do not need to increase the weights for this. Next, each voter $i$ who has not spent her share and approves a candidate in $X$ allocates her $\frac{1}{n}$ share uniformly to the candidates in $X\cap A_i$. Lastly, we delete all candidates in $X$ from the profile and repeat this process if some voter has not spent her share yet.
	
	\section{Theoretical Results}
	
	We are now ready to present our theoretical results. Specifically, we prove in \Cref{subsec:notNash} that \map, \fut, and \egal have large incentive ratios. Furthermore, in \Cref{subsec:Nash}, we show that \nash has an incentive ratio of $2$ and in \Cref{subsec:Nashoptimality} that this is optimal subject to the desirable properties of \nash. 
	
	\begin{figure}
		\centering
		\scalebox{0.6}{
			\begin{tikzpicture}
				\pie[radius=1.5, sum=12, text=inside, hide number, color={red!60,blue!60,yellow!60}, pos={0,0}]{8/{\Large$a\!:\frac{2}{3}$}, 1/\phantom{\LARGE ${{{I^I}^I}^I}^I\!\!\!\!\!\!\!\!\!\!\!\!\!$}$b\!:\!\!\frac{1}{12}$, 3/\Large$c\!:\frac{1}{4}$};
				\node at (0, -2) {\LARGE\nash};
			\end{tikzpicture}
			\hspace{0.3cm}
			\begin{tikzpicture}
				\pie[radius=1.5, sum=7, text=inside, hide number, color={red!60,blue!60,yellow!60}]{5/\Large$a\!:\frac{5}{7}$, 1/\phantom{\LARGE ${I^I}^I\!\!\!\!\!\!\!\!$}\Large$b\!:\frac{1}{7}$, 1/\Large$c\!:\frac{1}{7}$};
				\node at (0, -2) {\LARGE\fut};
			\end{tikzpicture}	
			\hspace{0.3cm}
			\begin{tikzpicture}
				\pie[radius=1.5, sum=7,  text=inside, hide number, color={red!60,yellow!60}]{5/\Large$a\!:\frac{5}{7}$, 2/\Large$c\!:\frac{2}{7}$};
				\node at (0, -2) {\LARGE\map};
			\end{tikzpicture}	
			\hspace{0.3cm}
			\begin{tikzpicture}
				\pie[radius=1.5, sum=2,  text=inside, hide number, color={red!60,yellow!60}]{1/\Large$a\!:\frac{1}{2}$, 1/\Large$c\!:\frac{1}{2}$};
				\node at (0, -2) {\LARGE\egal};
			\end{tikzpicture}	
		}
		\caption{Example of our distribution rules and the incentive ratio. The pie charts display the distributions chosen by \nash, \fut, \map, and \egal for the profile $A=(2\!: \{a\},  3\!: \{a,b\}, 1\!: \{b,c\}, 1\!: \{c\})$. To illustrate the incentive ratio, let $A'=(2\!: \{a\},  3\!: \{a,b\}, 1\!: \{b\}, 1\!: \{c\})$ be the profile where the sixth voter deviated to $\{b\}$. For this profile, \egal chooses the distribution $q$ with $q(a)=q(b)=q(c)=\frac{1}{3}$. Thus, the utility of the manipulator under \egal increases from $\frac{1}{2}$ to $\frac{2}{3}$, which yields an incentive ratio of $\frac{2/3}{1/2}=\frac{4}{3}$.
		}
		\label{fig:example}
	\end{figure}
	
	\subsection{Distribution Rules with Large Incentive Ratios}\label{subsec:notNash}
	
	As our first contribution, we prove that the incentive ratios of \map, \fut, and \egal grow linearly in the number of voters and candidates, respectively. This means that, in the worst-case, these rules can be severely manipulated. Moreover, we note that all three rules satisfy axiomatic weakenings of strategyproofness: \map and \fut are monotonic \citep{BBPS21a,ALL+25b}, whereas \egal satisfies a notion called excludable strategyproofness \citep{ABM20a}. Our result thus also demonstrates that these properties do not give convincing guarantees regarding strategyproofness.

	\begin{restatable}{theorem}{nonash}\label{thm:nonash}
		The following claims hold:
		\begin{enumerate}[leftmargin=*, label=(\arabic*)]
			\item The incentive ratio of \map is in $\Theta(n)$. 
			\item The incentive ratio of \fut is in $\Theta(n)$. 
			\item The incentive ratio of \egal is in $\Theta(m)$. 
		\end{enumerate}
	\end{restatable}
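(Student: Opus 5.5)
Each of the three claims needs an upper bound and a matching lower-bound construction.

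\emph{Upper bounds.} For \map and \fut, every voter controls a $\frac{1}{n}$ share. In both rules a truthful voter $i$ spends that share on candidates in $A_i$: \map assigns her share to a candidate she approves, and \fut splits it uniformly over $X\cap A_i$. Hence $u_i(f(A))\geq \frac{1}{n}$ on every profile. Since any manipulation yields utility at most $1$, the incentive ratio is at most $n$.

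For \egal, I will show that every voter gets utility at least $\frac{1}{m}$. Take the distribution $p^{\ast}$ that gives each candidate $\frac{1}{m}$. Under $p^{\ast}$ every voter has utility at least $\frac{1}{m}$, because each ballot is non-empty. The egalitarian distribution lexicographically maximizes the sorted utility vector, so its minimum utility is at least the minimum under $p^{\ast}$, which is at least $\frac{1}{m}$. This gives an incentive ratio of at most $m$.

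\emph{Lower bound for \map.} I want a profile in which, truthfully, the manipulator $i$ approves a popular candidate $a$ together with her favourite $b$. Candidate $a$ is selected first, so $i$'s share lands on $a$. At that point $i$ is deleted, and $b$ ends up receiving little, while $a$ itself receives only about $\frac{1}{n}$ worth of $i$'s share.

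This does not work as stated, because $i$ approves $a$ and so also enjoys $a$'s share. I therefore reverse the roles. Let $i$ approve $\{x_1,x_2\}$. Candidate $x_1$ is approved by only a few voters, and $x_2$ is approved by about $n/2$ voters. The remaining voters approve some $y$ together with $x_2$, so that under truthful voting a tie-break or score comparison makes $y$ (which $i$ dislikes) absorb those voters first. The intended deviation is that $i$ reports $\{x_2\}$ alone, or adds a candidate, which pushes $x_2$ above $y$ in score. Then $x_2$ collects about half the budget instead of $i$'s own $\frac{1}{n}$.

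Concretely, I will try $n-1$ other voters of the form $k\colon\{y,x_2\}$, $k\colon\{y\}$ and $k'\colon\{x_2\}$, with the counts tuned so that $y$ and $x_2$ are tied and the lexicographic order favours $y$. Truthfully, $y$ grabs everyone approving $y$, and $i$ gets only about $\frac{1}{n}$. If $i$ adds $x_2$ to her ballot (or keeps it while dropping $x_1$), $x_2$'s score rises by one and it is processed first, collecting about $\frac{n}{2}\cdot\frac{1}{n}$. The resulting ratio is $\Theta(n)$.

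\emph{Lower bound for \fut.} I use the same tie-flip idea. Because \fut is monotonic, the deviation must again raise the score of a candidate that $i$ likes so that it becomes the maximum-score candidate $t$. The profile has $k$ voters on $\{y\}$, $k$ voters on $\{x_2\}$, and $i$ on $\{x_1\}$, and $i$ also truly likes $x_2$. Truthfully, $i$ approves $\{x_1,x_2\}$, so $x_2$ has score $k+1$ versus $y$'s $k$; to keep $x_2$ from being processed first, I will instead give $y$ score $k+1$ via one extra voter. Then $y$ is processed first, the $x_2$-voters' weights rise, and $x_2$ and $x_1$ are reached simultaneously. That may already give $i$ a large share, so the profile needs tuning. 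I expect this construction to be the main obstacle, since the weight dynamics of \fut make it subtle to keep $i$'s truthful utility near $\frac{1}{n}$. My first attempt: $i$ approves $x_1$ together with a candidate $z$ approved by many voters who also approve $y$, while a separate $x_1$-cluster exists.

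\emph{Lower bound for \egal.} The manipulator has true ballot $\{x_1,\dots,x_{m-1}\}$. Each $x_j$ with $j<m$ is also supported by many voters with singleton ballots $\{x_j\}$, and one voter has ballot $\{x_m\}$. Truthfully, equalizing the minimum gives each singleton candidate $\frac{1}{m}$, so $i$ receives about $\frac{m-1}{m}$; this instance is therefore not bad, and I have to reverse the construction. Instead, $i$ approves only $\{x_1\}$, which is also approved by voters who like many other candidates, for instance the voters $\{x_1,x_j\}$ for every $j$. Truthfully, \egal can satisfy those voters through $x_j$ and gives $x_1$ only about $\frac{1}{m}$. When $i$ reports a deviation, \egal is forced to adjust the distribution so that $x_1$ receives a constant share. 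I will compute the leximin outcome explicitly for this family, which yields the $\Theta(m)$ ratio.
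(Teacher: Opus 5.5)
The flaw in your proposal is in the lower bounds. Your upper bounds are correct and agree with the paper's. For \fut the paper invokes GFS; your direct observation that each voter's $\frac{1}{n}$ share is split over $X\cap A_i$ is equally valid. None of the three lower bounds is actually established, however. For \map and \fut, the mechanism you aim for is impossible. The manipulator's truthful ballot already contains every candidate she likes, so no misreport can raise the score of a liked candidate. The only score increases she can create are for candidates she dislikes.

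Your concrete \map family shows the failure. Take $i\colon\{x_1,x_2\}$, $k\colon\{y,x_2\}$, $k\colon\{y\}$, $k'\colon\{x_2\}$. A tie between $y$ and $x_2$ forces $k'=k-1$. After $y$ absorbs its $2k$ supporters, $x_2$ still collects the remaining $k$ voters, so $u_i=\frac{k}{3k}=\frac{1}{3}$ already under truthful voting. Moreover, neither reporting $\{x_2\}$ nor ``adding $x_2$'' changes any score.

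What is needed is the opposite idea: the manipulator \emph{adds a disliked candidate}. That candidate is then processed before another disliked candidate and absorbs voters who would otherwise fund it, which releases voters who also approve her liked candidate.
\begin{itemize}
\item For \map, the paper uses $k\colon\{b,c\}$, $3\colon\{a,b\}$, $k-1\colon\{a\}$, $2\colon\{c\}$ with tie-breaking $a\succ c\succ b$. A $\{c\}$-voter who reports $\{a,c\}$ lets $a$ win the tie against $b$. The $\{b,c\}$-voters then fund $c$, and her utility rises from $\frac{2}{2k+4}$ to $\frac{k+1}{2k+4}$.
\item For \fut, where you give no construction at all, the same trick works with $1\colon\{a\}$, $k-2\colon\{a,b\}$, $2\colon\{b\}$, $2\colon\{b,d\}$, $3\colon\{a,c\}$, $k\colon\{c,d\}$, $1\colon\{d\}$. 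The $\{a\}$-voter adds $d$, making it the unique top-score candidate. Then $d$ swallows the $\{c,d\}$-block, and $a$ collects the $\{a,b\}$- and $\{a,c\}$-voters. Her utility rises from $\frac{1}{2k+7}$ to $\frac{k+1}{2k+7}$. One must verify through the weight dynamics that $a$ reaches the threshold before $b$ and $c$, which uses $k\geq 6$.
\end{itemize}

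For \egal you likewise give no construction, and the sketch fails as stated. With $i\colon\{x_1\}$ and voters $\{x_1,x_j\}$, \egal puts all mass on $x_1$, so $i$ already receives utility $1$. The paper goes the other way: the manipulator has a large true ballot and hides most of it. Her ballot is $\{x_1,y_1,\dots,y_{k+2}\}$, inside a $k\times k$ grid where voter $(r,s)$ approves $x_r$ and $y_s$, with a modified last column. Transversals of the grid with pairwise disjoint ballots show that the only distribution giving every voter $\frac{1}{k}$ puts $\frac{1}{k}$ on each $x_r$ and nothing on the $y_s$. She therefore gets $\frac{1}{k}$. After she reports just $\{y_1\}$, a similar transversal argument forces all mass onto the $y_s$. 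Her true utility becomes $1$, giving a ratio of $k=\frac{m-2}{2}$.

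Without such explicit profiles and the corresponding computations of the rules, your proposal proves only the upper-bound halves of the three $\Theta$-claims.
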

	\begin{proof}[Proof of Claim (1)] We only prove Claim~(1) here and defer the proofs of the other statements to \Cref{app:notNash}. Now, for the upper bound on the incentive ratio of \map, we note that every voter spends a share of $\frac{1}{n}$ on her approved candidates, so the distribution $p$ chosen by this rule satisfies that $u_i(p)\geq\frac{1}{n}$ for all voters $i\in N$. Since the utility of a voter is at most $1$, this implies that the incentive ratio of \map is upper bounded by $n$. 
		
		For the lower bound, we fix an integer $k\geq 2$ and consider the following profiles with $m=3$ candidates and $n=2k+4$ voters, which only differ in the ballot of the right-most voter. Further, we suppose the tie-breaking of \map favors $a$ to $c$ to~$b$.\smallskip
		
		{\tabcolsep=0.5em
			\noindent\begin{tabular}{lllllll}
				$A$: & $k$: $\{b,c\}$ & $3$: $\{a,b\}$ & $k-1$: $\{a\}$ & $2$: $\{c\}$ \\
				$A'$: & $k$: $\{b,c\}$ & $3$: $\{a,b\}$ & $k-1$: $\{a\}$ & $1$: $\{c\}$ & $1$: $\{a,c\}$\smallskip
			\end{tabular}		
		}
		
		For $A$, \map chooses the distribution $p$ with $p(b)=\frac{k+3}{2k+4}$, $p(a)=\frac{k-1}{2k+4}$, and $p(c)=\frac{2}{2k+4}$. Specifically, \map first assigns a share of $\frac{k+3}{2k+4}$ to $b$ as it is approved by the most voters. After deleting $b$ and the corresponding voters, the remaining voters report singleton ballots. By the definition of \map, it thus follows that $a$ obtains a share of $\frac{k-1}{2k+4}$ and $c$ of $\frac{2}{2k+4}$. 
		
		By contrast, both $a$ and $b$ are approved by $k+3$ voters in $A'$. By our tie-breaking assumption, this means that \map first assigns a share of $\frac{k+3}{2k+4}$ to $a$. Next, after deleting $a$ and the corresponding voters, all remaining voters approve $c$, so \map sends the shares of these voter to $c$. Thus, \map selects the distribution $q$ with $q(a)=\frac{k+3}{2k+4}$ and $q(c)=\frac{k+1}{2k+4}$ for $A'$. 
		
		Finally, let $i$ be the voter who deviates from $\{c\}$ to $\{a,c\}$ in our profiles. Assuming that $\{c\}$ is true ballot of $i$, the utility of this voter is $u_i(p)=\frac{2}{2k+4}$ in $A$ and $u_i(q)=\frac{k+1}{2k+4}$ in~$A'$. This means for the incentive ratio of \map that
		$\gamma(\text{\map})\geq \frac{u_i(q)}{u_i(p)}=\frac{k+1}{2}=\frac{n-3}{4}$. Hence, we conclude that $\gamma(\text{\map})\in \Theta(n)$. 
	\end{proof}
	
	\subsection{The Incentive Ratio of \nash}\label{subsec:Nash}
	
	We now turn to \nash and show that its incentive ratio is $2$. Since \nash is efficient and satisfies AFS, this result answers our original research question in the affirmative: there are efficient, fair, and approximately strategyproof distribution rules, so we can escape \Cref{thm:BBPS} by relaxing strategyproofness.
	
	Even more, we will show that \nash has an incentive ratio of $2$ for a much richer class of utility functions than approval preferences. To formalize this, we assume in this section that each voter $i$ reports a \emph{utility function} $u_i:\Delta(C)\rightarrow \mathbb{R}_{\geq 0}$ over the distributions instead of an approval ballot. Further, we require utility functions to be \emph{(i)} concave (i.e., $u_i(\lambda p + (1-\lambda) q)\geq \lambda u_i(p)+(1-\lambda) u_i(q)$ for all distributions $p,q\in \Delta(C)$ and $\lambda\in [0,1]$) and \emph{(ii)} non-zero (i.e., there is a distribution $p\in \Delta(C)$ with $u_i(p)>0$). Since the non-zero condition is almost trivial, we will refer to such utilities simply as \emph{concave}. We note this setup is strictly more general than the approval setting since each approval ballot $A_i$ can be treated as the (concave) utility function $u_i(p)=\sum_{x\in A_i} p(x)$. Moreover, our assumptions on utility functions are very mild and include numerous other models, such as linear utilities, Leontief utilities, and Cobb-Douglas utilities \citep[see, e.g.,][Ch.~4]{Vari99a}. Further, a (concave) \emph{utility profile} $U=(u_i)_{i\in N}$ contains the (concave) utility function of every voter $i\in N$ and we adopt notation analogous to approval profiles. In particular, $(u_i', U_{-i})$ is the utility profile derived from $U$ by changing the utility function of voter $i$ to $u_i'$. 
	
	We note that the definition of the incentive ratio immediately generalizes to our enriched setting: the incentive ratio of a distribution rule $f$ is $\gamma(f)=\sup_{U,i, u_i'} \frac{u_i(f(u_i', U_{-i}))}{u_i(f(U))}$, where the supremum is taken over all concave utility profiles~$U$, concave utility functions $u_i'$, and voters $i$. 
	
	To prove that the incentive ratio of \nash is $2$ for concave utility functions, we first discuss three auxiliary claims. The proof of the following lemma is deferred to \Cref{app:Nash}.

	\begin{restatable}{lemma}{nashaux}\label{lem:nashaux}
		Fix an arbitrary concave utility profile $U$. 
		\begin{enumerate}[leftmargin=*, label=(\arabic*)]
			\item It holds that $u_i(\text{\nash}(U))>0$ for all voters $i\in N$. 
			\item For all distributions $q\in \Delta(C)$, it holds that 
			\[\sum_{i\in N} \frac{u_i(q)-u_i(\text{\nash}(U))}{u_i(\text{\nash}(U))}\leq 0.\]
			\item For all voters $i\in N$, it holds that \[e^{-1}\leq \frac{\prod_{j\in N\setminus \{i\}} u_j(\text{\nash}(U))}{\prod_{j\in N\setminus \{i\}} u_j(\text{\nash}(U_{-i}))}\leq 1.\] 
		\end{enumerate}
	\end{restatable}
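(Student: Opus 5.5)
We write $p=\text{\nash}(U)$ and $p'=\text{\nash}(U_{-i})$, and treat the three claims in order, since (3) uses (1) and (2).

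For Claim (1), I argue by construction. For each voter $j$ there is a distribution $q_j$ with $u_j(q_j)>0$ because the utilities are non-zero. Let $\bar q=\frac1n\sum_j q_j$. By concavity and non-negativity, $u_j(\bar q)\ge \frac1n u_j(q_j)>0$ for every $j$. Hence the maximal Nash welfare is strictly positive, and so every maximizer gives each voter positive utility. A maximizer exists only if the utilities are continuous on the simplex. Concave functions may fail to be upper semicontinuous on the boundary, so I would either assume continuity or argue with the supremum. I expect this to be a minor technical point.

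For Claim (2), I use the first-order optimality of $p$ for $\sum_j \log u_j$. Fix $q$ and set $p_\varepsilon=(1-\varepsilon)p+\varepsilon q\in\Delta(C)$. Concavity gives $u_j(p_\varepsilon)\ge u_j(p)+\varepsilon(u_j(q)-u_j(p))$. Since $\log$ is increasing and $p$ is optimal,
\[0\ge \sum_j \log u_j(p_\varepsilon)-\log u_j(p)\ge \sum_j \log\Bigl(1+\varepsilon\tfrac{u_j(q)-u_j(p)}{u_j(p)}\Bigr).\]
By (1), each $u_j(p)>0$, so every term is well-defined for small $\varepsilon$. Dividing by $\varepsilon$ and letting $\varepsilon\to0$ yields $\sum_j \frac{u_j(q)-u_j(p)}{u_j(p)}\le0$.

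For Claim (3), I write $P=\prod_{j\ne i}u_j(p)$ and $P'=\prod_{j\ne i}u_j(p')$.

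The upper bound $P\le P'$ is immediate, because $p'$ maximizes $\prod_{j\ne i}u_j$ over all distributions, including $p$.

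The lower bound $P\ge e^{-1}P'$ is the main step. I apply (2) to the profile $U$ with $q=p'$ and drop voter $i$'s term. That term is $\frac{u_i(p')-u_i(p)}{u_i(p)}\ge -1$ because utilities are non-negative. This gives
\[\sum_{j\ne i}\frac{u_j(p')-u_j(p)}{u_j(p)}\le 1,\quad\text{i.e.}\quad \sum_{j\ne i}\frac{u_j(p')}{u_j(p)}\le n.\]
Next I apply AM–GM to the $n-1$ ratios:
\[\frac{P'}{P}=\prod_{j\ne i}\frac{u_j(p')}{u_j(p)}\le\Bigl(\frac{n}{n-1}\Bigr)^{n-1}=\Bigl(1+\frac{1}{n-1}\Bigr)^{n-1}\le e.\]
Taking reciprocals gives $P/P'\ge e^{-1}$. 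The case $n=1$ is trivial, since both products are empty.

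The only delicate point is in (2): the limit argument needs $u_j(p)>0$ for every $j$, which is why (1) is proved first. I also need the optimality of $p$ to hold without any differentiability assumption, and the one-sided concavity bound above is what allows this.
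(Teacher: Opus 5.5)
Your proof is correct. For Claims (1) and (3) it matches the paper up to cosmetics. The paper uses the uniform distribution as a common witness, writing $p_{\mathit{uni}}=\frac1m q_i+\sum_{x\in C}\frac{1-q_i(x)}{m}p_x$ for each voter, where you average the $q_j$. For the lower bound in (3), the paper invokes Lemma~3.3 of \citet{CGG13a}, which is exactly your AM--GM step applied to the numbers $1+d_j$ under $\sum_{j\ne i}d_j\le 1$.

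The genuine difference is Claim (2). The paper works with one-sided directional derivatives along $x_t=(1-t)p+tq$:
\begin{itemize}
\item it shows the difference quotients $s_i(t)$ are non-increasing;
\item it proves they are bounded by a separate contradiction with the optimality of $p$;
\item it sets $z_i=\lim_{t\downarrow 0}s_i(t)$ and deduces right-continuity of $u_i$ along the segment;
\item it computes $\limsup_{t\downarrow 0}\frac{\Phi(t)-\Phi(0)}{t}=\sum_i z_i/u_i(p)\le 0$ via the mean value theorem;
\item only at the end does it use $u_i(q)-u_i(p)\le z_i$.
\end{itemize}
Your argument shows this machinery is unnecessary. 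You insert the chord bound $u_j(p_\varepsilon)\ge u_j(p)(1+\varepsilon d_j)$ directly into the logarithm and let $\varepsilon\to 0$. This is legitimate for every $\varepsilon\in(0,1)$, not just small ones, since $d_j\ge -1$ gives $1+\varepsilon d_j\ge 1-\varepsilon>0$. The inequality then follows in a few lines, again without any differentiability assumption. The paper's route does establish a somewhat stronger first-order condition, $\sum_i z_i/u_i(p)\le 0$, in terms of actual directional derivatives. However, neither the lemma nor \Cref{thm:nash} needs more than your chord version.

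On existence: the statement presupposes that \nash returns a Nash-welfare maximizer, and the paper tacitly assumes this too. Given that, your Claim (1) needs no continuity assumption. Your worry is still legitimate, since for merely concave utilities the supremum need not be attained. Note, though, that continuity is sufficient for attainment, not necessary as your ``only if'' suggests.
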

	
	Less formally, Claim (1) shows that \nash guarantees each voter strictly positive utility. Claim (2) states the separation inequality witnessing that the distribution $p=\text{\nash}(U)$ maximizes the Nash welfare $\prod_{i\in N} u_i(p)$. While this inequality appears also in earlier works \citep[e.g.,][]{AAC+19a,ABM20a}, we derive it under more general conditions than these works as we do not even require that utility functions are differentiable. Lastly, Claim (3) demonstrates that the Nash welfare of the voters $N\setminus \{i\}$ under \nash cannot decrease too much when voter $i$ joins the election. A similar insight has been proven by \citet{CGG13a} for a private goods setting.
	
	Based on these insights, we now prove our main theorem.

	\begin{theorem}\label{thm:nash}
		The incentive ratio of \nash is $2$, even if voters are allowed to report concave utility functions. 
	\end{theorem}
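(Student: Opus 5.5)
We prove the upper bound $\gamma(\text{\nash})\le 2$ for concave utilities by combining the three claims of \Cref{lem:nashaux}. We then give a separate family of approval profiles showing that the bound is tight.

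\emph{Upper bound.} Fix a concave profile $U$, a voter $i$ and a misreport $u_i'$. Let $p=\text{\nash}(U)$, $q=\text{\nash}(u_i',U_{-i})$ and $r=\text{\nash}(U_{-i})$. By Claim~(1), $u_j(p)>0$ for all $j$, so all ratios below are well defined. We apply Claim~(2) to the truthful profile $U$ with the distribution $q$:
\[
\frac{u_i(q)}{u_i(p)} - 1 + \sum_{j\neq i}\Big(\frac{u_j(q)}{u_j(p)}-1\Big)\le 0,
\quad\text{hence}\quad
\frac{u_i(q)}{u_i(p)}\le n-\sum_{j\neq i}\frac{u_j(q)}{u_j(p)}.
\]
It therefore suffices to show $\sum_{j\neq i} u_j(q)/u_j(p)\ge n-2$. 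By AM--GM, this sum is at least
\[
(n-1)\Big(\tfrac{\prod_{j\ne i}u_j(q)}{\prod_{j\ne i}u_j(p)}\Big)^{1/(n-1)}.
\]
We bound the product ratio in two steps.
\begin{itemize}
\item Claim~(3) applied to the profile $(u_i',U_{-i})$ gives $\prod_{j\neq i}u_j(q)\ge e^{-1}\prod_{j\ne i}u_j(r)$. This uses that removing voter $i$ from $(u_i',U_{-i})$ yields $U_{-i}$ regardless of $u_i'$.
\item Since $r$ maximizes the Nash welfare of $N\setminus\{i\}$, we have $\prod_{j\ne i}u_j(r)\ge\prod_{j\ne i}u_j(p)$.
\end{itemize}
Hence the product ratio is at least $e^{-1}$, and the sum is at least $(n-1)e^{-1/(n-1)}\ge (n-1)\big(1-\tfrac1{n-1}\big)=n-2$, using $e^{-x}\ge 1-x$. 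This yields $u_i(q)\le 2u_i(p)$. The case $n=1$ is trivial, since then $p$ already maximizes $u_i$.

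\emph{Lower bound.} Since approval utilities are a special case of concave utilities, it suffices to exhibit approval profiles with ratio tending to $2$. The upper-bound argument shows where to look. Equality requires that
\begin{itemize}
\item the manipulation barely changes the relative utilities of the other voters, so that AM--GM is nearly tight;
\item Claim~(2) is nearly tight at $q$;
\item $n$ is large, so that $e^{-1/(n-1)}\approx 1-\frac1{n-1}$.
\end{itemize}
Accordingly, I would build a parametric family with $n\to\infty$. The manipulator's true ballot consists of a candidate she shares with a large group, together with a candidate of her own. By dropping the shared candidate, she forces \nash to shift roughly an additional $1/n$ of the budget toward her private candidate. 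Her truthful utility stays of order $1/n$, while her manipulated utility approaches $2/n$. For each member of the family I would compute both \nash outcomes explicitly via the first-order conditions (equal marginal ``prices'' on supported candidates), and then let the parameter grow.

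The main obstacle is finding this tight family and computing its \nash outcomes in closed form. The upper bound is essentially mechanical once Claims~(2) and~(3) are available. If approval profiles turn out to reach only a bound strictly below $2$ for small instances, I would first solve small cases numerically to guess the extremal structure, and then generalize it.
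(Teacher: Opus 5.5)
\textbf{Upper bound.} This part is correct and is essentially the paper's argument. You use Claim~(2) of \Cref{lem:nashaux} at $p$ against $q$, Claim~(3) for $(u_i',U_{-i})$, optimality of $\text{\nash}(U_{-i})$ for $N\setminus\{i\}$, AM--GM, and $e^{-x}\ge 1-x$. The paper phrases the same steps as a contradiction to $u_i(q)/u_i(p)>2$.

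\textbf{Lower bound.} Here there is a genuine gap. The theorem says the incentive ratio \emph{equals} $2$, but you give no profile and no computation, only a search plan, so half the statement is unproved. The regime you aim for is also the wrong one. Take your scenario: $A_i=\{x,y\}$, $y$ approved by nobody else, $x$ approved by $a\ge 1$ other voters, and deviation to $\{y\}$.
\begin{enumerate}[label=(\roman*)]
\item Moving $y$'s share to $x$ is a Pareto improvement, so $p(y)=0$ and $u_i(p)=p(x)$.
\item Claim~(2) with the point mass on $x$ gives $\sum_{j\colon x\in A_j}1/u_j(p)\le n$. Every term is at least $1$, so $u_i(p)\ge\frac{1}{n-a}$.
\item After the deviation, $y$ gets exactly $\frac1n$ and the rest is $\frac{n-1}{n}\hat p$ for a Nash-welfare maximizer $\hat p$ of $U_{-i}$. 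Hence $u_i(q)=\frac1n+\frac{n-1}{n}\hat p(x)$.
\end{enumerate}
If $x$ keeps its share, as your plan assumes ($\hat p(x)\le p(x)$), the ratio is at most $\frac{1}{n\,p(x)}+\frac{n-1}{n}\le 2-\frac{a+1}{n}$. This stays bounded away from $2$ when $x$ is shared with a linear-size group. So ``truthful utility about $1/n$'' and ``$x$ shared with a large group'' cannot both hold under \nash.

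\textbf{How the paper gets the lower bound.} It takes it from \Cref{thm:nashopt}: \nash satisfies AFS (and is the SCWM with $h=\log$), so Claim~(1) there applies. The profile has $\ell$ voters $\{x,y_1\},\dots,\{x,y_\ell\}$, $k+1$ voters $\{x\}$, and $k$ voters $\{y_1,\dots,y_\ell\}$. The $\{x,y_1\}$ voter, with $y_1$ the $y$-candidate of least share, deviates to $\{y_1\}$.
\begin{itemize}
\item AFS for the $k$ voters bounds her truthful utility by $1-\frac{\ell-1}{\ell}\cdot\frac{k}{2k+\ell+1}$.
\item After the deviation, AFS for the supporters of $x$ and of $y_1$ forces $q(x)+q(y_1)=1$.
\item With $k=\ell^2\to\infty$, the ratio tends to $2$.
\end{itemize}
In this example her utility goes from about $\frac12$ to $1$. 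She does not claim an extra $\frac1n$ for a private candidate. Instead, she makes the whole block of budget funded by the $k$ indifferent voters concentrate on $y_1$.
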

	\begin{proof}
		We will show here only that the incentive ratio of \nash is at most $2$ since a matching lower bound follows from \Cref{thm:nashopt}. For this, we assume for contradiction that there is a concave utility profile $U$, a voter $i$, and a concave utility function $u_i'$ such that $\frac{u_i(\text{\nash}(U_{-i}, u_i'))}{u_i(\text{\nash}(U))}> 2$. To ease the notation, we let $p=\text{\nash}(U)$ and $q=\text{\nash}(U_{-i}, u_i')$. We~will show that $\sum_{j\in N} \frac{u_j(q)}{u_j(p)}>n$, which contradicts Claim~(2) in \Cref{lem:nashaux} as $\sum_{j\in N} \frac{u_j(q)-u_j(p)}{u_j(p)}=\sum_{j\in N} \frac{u_j(q)}{u_j(p)} -n$.
		
		To prove this inequality, we first derive a lower bound on $\sum_{j\in N\setminus \{i\}} \frac{u_j(q)}{u_j(p)}$. For this, we consider the term $\prod_{j\in N\setminus \{i\}} \frac{u_j(q)}{u_j(p)}$, which is well-defined by Claim (1) of \Cref{lem:nashaux}. Next, let $\hat p=\text{\nash}(U_{-i})$ denote the distribution chosen by \nash when voter $i$ is absent. Clearly, it holds that 
		$\prod_{j\in N\setminus \{i\}} \frac{u_j(q)}{u_j(p)}=\frac{\prod_{j\in N\setminus \{i\}} u_j(q)}{\prod_{j\in N\setminus \{i\}} u_j(\hat p)}\cdot \frac{\prod_{j\in N\setminus \{i\}} u_j(\hat p)}{\prod_{j\in N\setminus \{i\}} u_j(p)}$.
		Further, by Claim (3) of \Cref{lem:nashaux}, we have that $\frac{1}{e}\leq \frac{\prod_{j\in N\setminus \{i\}} u_j(q)}{\prod_{j\in N\setminus \{i\}} u_j(\hat p)}$ and $1\leq \frac{\prod_{j\in N\setminus \{i\}} u_j(\hat p)}{\prod_{j\in N\setminus \{i\}} u_j(p)}$. We hence conclude that 
		\[ \frac{1}{e}\leq
		\frac{\prod_{j\in N\setminus \{i\}} u_j(q)}{\prod_{j\in N\setminus \{i\}} u_j(\hat p)}\cdot \frac{\prod_{j\in N\setminus \{i\}} u_j(\hat p)}{\prod_{j\in N\setminus \{i\}}  u_j(p)}= \prod_{j\in N\setminus \{i\}} \frac{u_j(q)}{u_j(p)}.\]
		
		By applying the logarithm, it follows that $\log \frac{1}{e}\leq \sum_{j\in N\setminus \{i\}} \log \frac{u_j(q)}{u_j(p)}$.
		Next, Jensen's inequality implies that 
		\[\frac{1}{n-1}\sum_{j\in N\setminus \{i\}}\log  \frac{u_j(q)}{u_j(p)}\leq \log \left(\frac{1}{n-1}\sum_{j\in N\setminus \{i\}} \frac{u_j(q)}{u_j(p)}\right).\]
		
		By chaining our inequalities, we infer that 
		\[\log\frac{1}{e}\leq (n-1) \log \left(\frac{1}{n-1}\sum_{j\in N\setminus \{i\}} \frac{u_j(q)}{u_j(p)}\right).\]
		
		Solving for $\sum_{j\in N\setminus \{i\}} \frac{u_j(q)}{u_j(p)}$ then shows that
		\[(n-1) e^{-\frac{1}{n-1}}\leq \sum_{j\in N\setminus \{i\}} \frac{u_j(q)}{u_j(p)}.\]
		
		We now conclude that $\sum_{j\in N} \frac{u_j(q)}{u_j(p)}\geq (n-1) e^{-\frac{1}{n-1}} + 2$ as $\frac{u_i(q)}{u_i(p)}\geq 2$. We will next show that $(n-1)e^{-\frac{1}{n-1}} + 2>n$ for all $n\geq 2$, or equivalently that $(n-1)(e^{-\frac{1}{n-1}} -1)>-1$. (We also observe that \nash is strategyproof if $n=1$). For this, we note that $x(e^{-\frac{1}{x}}-1)>x(1-\frac{1}{x}-1)=-1$ for all $x>0$ since $e^{y}>1+y$ for all  $y\neq 0$. By setting $x=n-1$, this proves our inequality. Hence, the assumption that $\frac{u_i(q)}{u_i(p)}>2$ contradicts Claim (2) of \Cref{lem:nashaux}, thus proving our theorem. 
	\end{proof}
	
	\begin{remark}
		\Cref{thm:nash} generalizes to group manipulations: under \nash, no group of voters can misreport their utility functions such that each member of the group obtains more than twice her original utility. More formally, it holds for all concave utility profiles $U$ and groups of voters $S\subseteq N$ that there is no group manipulation $U_S'$ such that $\frac{u_i(\text{\nash}(U_{-S}, U_S'))}{u_i(\text{\nash}(U)}> 2$ for all $i\in S$. 
		This statement can be proven similar to \Cref{thm:nash} because Claim~(3) of \Cref{lem:nashaux} can be generalized to show that $e^{-|S|}\leq \frac{\prod_{i\in N\setminus S} u_i(\text{\nash}(U))}{\prod_{i\in N\setminus S} u_i(\text{\nash}(U_{-S}))}\leq 1$ for all groups of voters $S\subseteq N$. 
	\end{remark}

	\begin{remark}
		Based on Claim (2) of \Cref{lem:nashaux}, it can be shown that many of the desirable properties that \nash satisfies under approval preferences also hold under concave utility functions. In particular, by combining our lemma with the proof of Theorem 3 of \citet{ABM20a}, it follows that \nash is in the core for all concave utility functions \citep[see also][]{CLK+22a}. Thus, \nash is fair, efficient, and approximately strategyproof for all concave utility functions, whereas no rule satisfies these criteria when using full strategyproofness, even if only approval utilities are allowed \citep{BBPS21a}. 
	\end{remark}
	
	\subsection{Optimality of \nash}\label{subsec:Nashoptimality}
	
	A natural follow-up question to \Cref{thm:nash} is whether there are fair and efficient rules with a lower incentive ratio than \nash. As we will show next, such rules do not exist within three natural classes of distribution rules, even when only allowing for approval ballots. This suggests that the incentive ratio of \nash may be optimal subject to fairness and efficiency. Specifically, we will focus on the following classes of rules. 
	
	\paragraph{Rules satisfying AFS.} First, we consider rules that satisfy AFS (see \Cref{subsec:eff+fair}). By showing that each such rule has an incentive ratio of at least $2$, it follows that the incentive ratio of \nash is optimal subject to its fairness guarantees. 
	
	\paragraph{Strictly concave welfare maximizers (SCWMs).} As the second class, we study strictly concave welfare maximizers. To define these rules, we say a function $h:[0,1]\rightarrow \mathbb{R}\cup\{-\infty\}$ is a \emph{welfare function} if \emph{(i)} $h(x)\neq-\infty$ for all $x>0$ and \emph{(ii)} $h$ is strictly increasing. 
	Then, a distribution rule $f$ is an (additively separable) \emph{welfare maximizer} if there is a welfare function $h$ such that $f$ always returns a distribution maximizing the $h$-welfare, i.e., $f(A)\in\arg\max_{p\in \Delta(C)} \sum_{i\in N} h(u_i(x))$ for all profiles $A$. Further, $f$ is a \emph{strictly concave welfare maximizer (SCWM)} if $h$ is strictly concave (i.e., $h(\lambda x + (1-\lambda)y)>\lambda h(x)+(1-\lambda) h(y)$ for all distinct $x,y\in [0,1]$ and $\lambda\in (0,1)$). For example, this class contains all rules that optimize a power welfare $\sum_{i\in N} (u_i(p))^\alpha$ for~$\alpha \in(0,1)$. 
	
	We are interested in SCWMs since all rules in this class are efficient and many seem promising to obtain an incentive ratio of less than $2$. For example, one may conjecture that the SCWM defined by $h(x)=x^{1-\epsilon}$ for a very small $\epsilon>0$ has an incentive ratio less than $2$ as it is close to maximizing the utilitarian welfare. Our result for SCWMs disproves this conjecture and, more generally, shows that tweaks in the objective of \nash do not allow to improve its incentive ratio.
	
	\paragraph{Regular rules satisfying GFS and efficiency.} Thirdly, we will focus on rules that satisfy efficiency and GFS (see \Cref{subsec:eff+fair}).
	However, since both of these conditions are rather mild, we will additionally require three basic auxiliary conditions, which we summarize by regularity. Specifically, a distribution rule $f$ is \emph{regular} if it is
	\begin{itemize}[topsep=0pt, itemsep=0pt]
		\item \emph{anonymous}: $f(\pi(A))=f(A)$ for all profiles $A$ and permutations $\pi:N\rightarrow N$ (i.e., the identities of voters do not matter for the outcome),
		\item \emph{neutral}: $f(\tau(A),\tau(x))=f(A,x)$ for all profiles $A$ and permutations $\tau:C\rightarrow C$ (i.e., the names of candidates do not matter for the outcome), and
		\item \emph{independent of losers}: $f(A)=f(A')$ for all profiles $A,A'$ and candidates $x$ such that $f(A,x)=0$ and $A'=(A_i\setminus \{x\})_{i\in N}$ is obtained from $A$ by deleting $x$ from the approval ballot of every voter.
	\end{itemize}
	These three properties are very mild and satisfied by all rules in this paper (except for \map which fails neutrality). Our third claim in \Cref{thm:nashopt} thus shows that the incentive ratio of \nash is already optimal when only requiring efficiency, a mild fairness axiom, and some basic auxiliary conditions.
	\medskip
	
	We show next that every rule that belongs to one of these classes has an incentive ratio of $2$. Since \nash is a member of each class, all our bounds are tight. 
	
	\begin{restatable}{theorem}{nashopt}\label{thm:nashopt}
		The following claims are true:
		\begin{enumerate}[leftmargin=*, label=(\arabic*)]
			\item Every distribution rule that satisfies AFS has an incentive ratio of at least $2$. 
			\item Every SCWM has an incentive ratio of at least $2$. 
			\item Every regular distribution rule that satisfies GFS and efficiency has an incentive ratio of at least $2$.
		\end{enumerate}
	\end{restatable}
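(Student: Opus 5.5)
The plan is to prove all three claims with lower-bound constructions of the same shape. We build a parameterized family of approval profiles with a voter $i$ whose true ballot is $A_i$ and a deviation $A_i'$. In the truthful profile, the class of rules under consideration can give $i$ at most about $\frac{1}{k}$. In the manipulated profile, the class must give $i$ at least about $\frac{2}{k}$. Letting $k\to\infty$ then yields an incentive ratio of at least $2$. The mechanism for the factor $2$ is to make $i$ pivotal: by reporting a ballot that aligns her with another voter $j$, she creates a group of size two whose common candidate is one that $i$ also values. AFS, GFS or the welfare objective then forces roughly $\frac{2}{n}$ onto it instead of the $\frac{1}{n}$ that $i$'s own fair share guarantees. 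Concretely, I would start from a profile in which $i$ truly approves $\{a,b\}$, a voter $j$ approves only $\{a\}$, and $k$ voters approve only $\{c\}$. I would then add a large block of voters approving $\{b,c\}$ (or $\{a,c\}$), arranged so that in the truthful profile $i$ is ``absorbed'' by a big group and receives only her individual share $\approx\frac{1}{n}$. I would tune the block sizes until the numbers below come out as needed.

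For Claim (1), the upper bound in the truthful profile comes from AFS together with the fact that utilities are at most $1$. I would choose the profile so that one AFS-compliant distribution (for instance the \nash outcome, computed from the first-order conditions behind Claim (2) of \Cref{lem:nashaux}) gives $i$ about $\frac{1}{n}$. Because the claim is a lower bound for every AFS rule, though, I instead need the truthful upper bound to hold for all AFS distributions, or else to exhibit a profile where $i$'s AFS lower bound is low and her guaranteed utility under every AFS distribution is also low. The cleaner route is the reverse: make the truthful utility small for every AFS distribution by letting a large group that excludes $i$'s candidates demand almost everything, and make the manipulated utility large via the AFS inequality for the group $\{i,j\}$ with common candidate $a$. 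That inequality gives an average of at least $\frac{2}{n}$, and since $j$'s utility equals $i$'s on $a$, it forces $u_i\gtrsim\frac{2}{n}$.

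For Claim (2), I would use the same profiles and compute the SCWM outcomes via the KKT conditions. At an optimum, every candidate with positive share has the same value of $\sum_{\ell\text{ approves }x} h'(u_\ell)$, and this sum is at most that value for zero-share candidates. The key is a limit argument: in the relevant regime $u_i$ tends to $0$, and strict concavity makes $h'$ strictly decreasing. I want the marginal contributions of $i$ and $j$ on $a$ to be the only force keeping $a$ positive, so that doubling the number of voters with derivative $h'(u)$ at candidate $a$ roughly doubles the equilibrium share. I expect this to be the main obstacle, since for general $h$ (e.g.\ $h(x)=x^{1-\epsilon}$) ``doubling the derivative'' does not obviously double $u$. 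I would try to sidestep this by symmetry: design the manipulated profile so that $i$ and $j$ are indistinguishable and receive exactly the share that a single voter of the truthful profile would receive for two voters. If necessary, I would choose the profile adaptively depending on $h$, using the freedom to pick $k$ after $h$ is fixed.

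For Claim (3), efficiency eliminates dominated candidates, and neutrality together with anonymity forces equal shares for symmetric candidates. Independence of losers lets us delete zero-share candidates and reduce to a small profile in which GFS pins down shares exactly: the singleton group $\{i\}$ gets at least $\frac{1}{n}$ in the truthful profile, and the group $\{i,j\}$ gets at least $\frac{2}{n}$ after the deviation. The upper bound in the truthful profile comes from efficiency plus GFS of the complementary large groups, which forces at least $1-\frac{1}{n}$ onto candidates $i$ does not approve.
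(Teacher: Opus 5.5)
\textbf{Assessment.} There is a genuine gap. Your core mechanism for the factor $2$ does not work, and this breaks Claims (1) and (3); Claim (2) is left unresolved.

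\textbf{The pair mechanism gives nothing new.} In your concrete profile, $A_i=\{a,b\}$ and $A_j=\{a\}$, so the pair $\{i,j\}$ already shares $a$ in the \emph{truthful} profile.
AFS there gives $u_i(p)+p(a)\geq \frac{4}{n}$ with $p(a)\leq u_i(p)$, hence $u_i(p)\geq \frac{2}{n}$ already. GFS gives $\sum_{x\in A_i\cup A_j}p(x)=u_i(p)\geq\frac{2}{n}$ directly.
More generally, suppose an AFS or GFS constraint bounds $i$'s true utility from below in the manipulated profile using only candidates of $A_i$. Then the same group yields the same bound in the truthful profile. Such constraints can therefore never separate $u_i(p)$ from $u_i(q)$.

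\textbf{Your truthful upper bound also caps the manipulated utility.} You obtain it from a large group whose ballots avoid $A_i$ (AFS), or from GFS of a ``complementary'' group. That constraint involves only voters other than $i$, so the deviation leaves it untouched, and it caps $u_i(q)$ exactly as it caps $u_i(p)$.

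\textbf{Claim (2) stays open.} You correctly identify the obstacle: in the $\frac{1}{n}$ versus $\frac{2}{n}$ regime the outcome depends on $h$, which need not even be differentiable or finite at $0$. However, ``choosing the profile adaptively'' is not an argument.

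\textbf{How the paper gets the factor $2$.} It uses interactions that the deviation genuinely changes.

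For (1), it works in the regime $\frac{1}{2}\to 1$. Take $\ell$ voters $\{x,y_r\}$, $k+1$ voters $\{x\}$, and $k$ voters $\{y_1,\dots,y_\ell\}$.
AFS for the last block puts about $\frac{1}{2}$ on the $y$'s. The manipulator is chosen as the $\{x,y_1\}$-voter with $y_1$ the least-funded $y$, so $u_{i^*}(p)\leq 1-\frac{\ell-1}{\ell}\cdot\frac{k}{2k+\ell+1}$, which tends to $\frac{1}{2}$ for $k=\ell^2$.
After she drops $x$ and reports $\{y_1\}$, add the AFS inequalities of the $x$-supporters and the $y_1$-supporters and use $\sum_c q(c)=1$. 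This forces $q(x)+q(y_1)=1$.

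For (2), it avoids any dependence on $h$. In the truthful profile every candidate is approved by exactly half the voters, so the utility sum is constant and strict concavity forces every utility to equal $\frac{1}{2}$.
In the manipulated profile, $x_1,y_1,\dots,y_\ell$ are cyclically symmetric. Averaging over rotations together with strict concavity forces them to receive equal shares, which gives $i^*$ at least $\frac{\ell}{\ell+1}$.

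For (3), the manipulator reports candidates \emph{outside} her true ballot. This makes each row of $Y$ symmetric and hence Pareto-dominated, so efficiency with anonymity and neutrality sets all of $Y$ to zero.
That collapses the approved set of a group of $2k-1$ voters onto her true candidates $\{x_{k+1},z\}$, and GFS yields $\frac{2k-1}{(k+1)^2}$.
The truthful value $\frac{1}{k+1}$ does not come from GFS. It comes from symmetry after independence of losers removes the zero-share candidates.
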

	\begin{proof}
	We prove each of the three claims separately. 
	\medskip

	\textbf{Proof of Claim (1)}:
	Let $f$ denote a rule satisfying AFS, and fix two integers ${\ell,k\in\mathbb{N}}$ with $k\geq \ell\geq 2$. We consider the following profile $A$ with $2k+\ell+1$ voters and $\ell+1$ candidates $C=\{x, y_1,\dots, y_\ell\}$: 
		\begin{itemize}[topsep=0pt, itemsep=0pt]
			\item For each $i\in [\ell]$, there is one voter who reports $\{x, y_i\}$. 
			\item There are $k+1$ voters who only approve $x$.
			\item There are $k$ voters who report $\{y_1,\dots, y_\ell\}$.
		\end{itemize}
		Next, the profile $A'$ is derived from $A$ by changing the approval ballot of the voter $i^*$ who reports $\{x,y_1\}$ in $A$ to $\{y_1\}$. For example, if $\ell=3$ and $k=8$, the profiles look as follows.\smallskip
		
		\noindent
		{\tabcolsep=3pt
			\begin{tabular}{lllllll}
				$A$: & $1$:$\,\{x, y_1\}$ & $1$:$\,\{x, y_2\}$ &  $1$:$\,\{x, y_3\}$ & $9$:$\,\{x\}$ & $8$:$\,\{y_1,y_2, y_3\}$\\
				$A'$: & $1$:$\,\{y_1\}$ & $1$:$\,\{x, y_2\}$ &  $1$:$\,\{x, y_3\}$ & $9$:$\,\{x\}$ & $8$:$\,\{y_1,y_2, y_3\}$
			\end{tabular}
		}
		
		We first analyze the distribution $p=f(A)$. To this end, we observe that AFS implies for the $k$ voters who report $\{y_1,\dots, y_\ell\}$ that $\frac{k}{k}\sum_{i\in [\ell]} p(y_i)\geq \frac{k}{2k+\ell+1}$. Without loss of generality, we assume that $y_1$ is the candidate with the minimal probability in $\{y_1,\dots, y_\ell\}$, which means that $p(y_1)\leq \frac{1}{\ell} \sum_{i\in [\ell]} p(y_i)$ and $\sum_{i\in [\ell]\setminus \{1\}} p(y_i)\geq \frac{\ell-1}{\ell} \sum_{i\in[\ell]} p(y_i)\geq \frac{\ell-1}{\ell}\cdot \frac{k}{2k+\ell+1}$. In turn, this implies for the utility of voter $i^*$ (who reports $\{x, y_1\}$ in~$A$) that $u_{i^*}(p)\leq 1-\frac{\ell-1}{\ell}\cdot \frac{k}{2k+\ell+1}$. 
		
		Next, we turn to $q=f(A')$. By applying AFS to the sets $N_x=\{i\in N\colon x\in A_i'\}$ and $N_{y_1}=\{i\in N\colon y_1\in A_i'\}$, respectively, we infer the following two inequalities for $q$. 
		\begin{align*}
			\sum_{i\in N_x} u_i(q)&=(k+\ell)q(x)+\sum_{i\in [\ell]\setminus \{1\}} q(y_i)\geq \frac{(k+\ell)^2}{2k+\ell+1}\\
			\sum_{i\in N_{y_1}} u_i(q)&=(k+1)q(y_1)+k\!\!\!\sum_{i\in [\ell]\setminus \{1\}}\!\!\! q(y_i)\geq \frac{(k+1)^2}{2k+\ell+1}
		\end{align*}
		By summing up these two inequalities, we derive that 
		\[(k+\ell)q(x)+(k+1)\sum_{i\in [\ell]} q(y_i)\geq\frac{(k+\ell)^2+(k+1)^2}{2k+\ell+1}.\]
		Since $q(x)+\sum_{i\in[\ell]} q(y_i)=1$, this means that 
		\begin{align*}
			(\ell-1)q(x)&\geq \frac{(k+\ell)^2+(k+1)^2-(k+1)(2k+\ell+1)}{2k+\ell+1}\\
			&=\frac{k(\ell-1)+\ell(\ell-1)}{2k+\ell+1}.
		\end{align*}
		
		This shows that $q(x)\geq \frac{k+\ell}{2k+\ell+1}$. In turn, we derive that $\sum_{i\in [\ell]} q(y_i)\leq \frac{k+1}{2k+\ell+1}$ and by the AFS inequality for $y_1$ that $q(y_1)\geq \frac{k+1}{2k+\ell+1}$. Since voter $i^*$ reports $\{x, y_i\}$ in $A$, this proves that $u_{i^*}(q)=1$. 
		Hence, the incentive ratio of $f$ is at least $\frac{u_{i^*}(q)}{u_{i^*}(p)}\geq\frac{1}{1-\frac{\ell-1}{\ell}\cdot \frac{k}{2k+\ell+1}}$. 
		When setting $k=\ell^2$ and letting $\ell$ go to infinity, this fraction converges to $2$, so every rule satisfying AFS has an incentive ratio of at least~$2$.\medskip

	\textbf{Proof of Claim (2)}: Fix a strictly concave welfare function $h$ and let $f$ denote the induced SCWM. Further, we fix an integer $\ell\in \mathbb{N}$ with $\ell\geq 2$, and consider the following profile $A$ with $2\ell$ voters and $\ell+2$ candidates $C=\{x_1,x_2,y_1,\dots, y_\ell\}$. 
		\begin{itemize}[topsep=0pt, itemsep=0pt]
			\item For each $i\in [\ell]$, there is one voter who approves $x_1$ and all candidates $\{y_1,\dots, y_\ell\}$ except $y_i$. Put differently, these voters approve $C\setminus\{x_2,y_i\}$ for $i\in [\ell]$. 
			\item There is one voter who approves all candidates but $x_1$. We denote this voter by $i^*$. 
			\item There are $\ell-1$ voters who only approve $x_2$. 
		\end{itemize}
		Further, the profile $A'$ is derived from $A$ by changing the ballot of voter $i^*$ to $C\setminus \{x_1,x_2\}$, i.e., this voter additionally disapproves $x_2$. We subsequently illustrate the profiles $A$ and~$A'$ for $\ell=3$.
		
		\noindent
		\begin{tabular}{llllllllll}
			$A$: & $1$: $\{x_1,y_2,y_3\}$ & $1$: $\{x_1,y_1,y_3\}$ & $1$: $\{x_1,y_1,y_2\}$ \\
			&$1$: $\{x_2,y_1,y_2,y_3\}$ & $2$: $\{x_2\}$\smallskip\\
			$A'$: & $1$: $\{x_1,y_2,y_3\}$ & $1$: $\{x_1,y_1,y_3\}$ & $1$: $\{x_1,y_1,y_2\}$ \\
			& $1$: $\{y_1,y_2,y_3\}$ & $2$: $\{x_2\}$
		\end{tabular}

		We claim that $f$ chooses for $A$ the distribution $p$ with $p(x_1)=p(x_2)=0.5$. To prove this, we first observe that every candidate is approved by exactly half of the voters in $A$. Hence, for every distribution $p'$, it holds that $\sum_{i\in N} u_i(p')=\sum_{x\in C} \sum_{i\in N\colon x\in A_i} p'(x)=\ell \sum_{x\in C} p'(x)=\ell$. Now, assume for contradiction that $f$ chooses a distribution $p'$ for $A$ such that $u_i(p')\neq 0.5$ for some voter $i\in N$. By the strict concavity of $h$, we have that 
		\begin{align*}
			\sum_{i\in N} h(0.5)
			=\sum_{i\in N} h \left(\frac{1}{2\ell} \sum_{j\in N} u_j(p') \right)
			>\sum_{i\in N} \frac{1}{2\ell} \sum_{j\in N} h(u_j(p'))
			=\sum_{i\in N} h(u_j(p')).
		\end{align*}
		This proves that $p$ generates a larger $h$-welfare than $p'$, so $p'$ cannot be chosen by $h$. Finally, to ensure that every voter has a utility of $0.5$, we must have $p(x_2)=0.5$ because there are $\ell-1$ voters who only approve $x_2$. Further, since all voters need to have utility of $0.5$, this implies that $p(y_i)=0$ for all $i\in [\ell]$ as voter $i^*$ approves all candidates but $x_1$. In turn, this proves that $p(x_1)=0.5$. 
		
		Next, for $A'$, we claim that $f$ has to choose a distribution $q$ with $q(x_1)=q(y_1)=\dots=q(y_\ell)$. To see this, we set $y_0=x_1$ and define $a+_{\ell+1} b = (a+b)\mod (\ell+1)$ for all $a,b\in \mathbb{N}$. Now, fix a distribution $q'$ that violates our equation, and denote by $q^k$ for $k\in \{0,\dots, \ell\}$ the distribution with $q^k(x_2)=q'(x_2)$ and $q^k(y_j)=q'(y_{j+_{\ell+1} k})$ for all $j\in \{0,\dots, \ell\}$. We note that, for all candidates $y_i,y_j$, it holds that $\sum_{k=0}^\ell q^k(y_i)=\sum_{k=0}^\ell q'(y_k) = \sum_{k=0}^\ell q^k(y_j)$. Lastly, let $q^*$ be the distribution defined by $q^*=\frac{1}{\ell+1}\sum_{k=0}^\ell q^k$ and observe that $q^*(y_i)=q^*(y_j)$ for all $i,j\in \{0,\dots, \ell\}$. This also means that $q^*\neq q'$ as $q'$ fails this condition.
		
		We will show that $\sum_{i\in N} h(u_i(q^*))>\sum_{i\in N} h(u_i(q'))$, which proves that $f$ cannot choose $q'$ for $A'$. First, for the voters who approve only $x_2$, we clearly have that $u_i(q')=u_i(q^*)$ since $q'(x_2)=q^*(x_2)$. On the other hand, for the remaining $\ell+1$ voters, we observe that, for every candidate $y_i\in \{y_0=x_1,y_1\dots, y_\ell\}$, there is exactly one voter who reports $C\setminus \{y_i,x_2\}$. This means that, for every voter $i$ and integer $k$, there is precisely one voter $j$ such that $u_i(q^k)=u_j(q^0)$.  Hence, when letting $S$ denote the set of these $\ell+1$ voters, we have for each $i\in S$ that $u_i(q^*)=\frac{1}{\ell+1}\sum_{j\in S} u_j(q^0)$. Using the strict concavity of $h$, we compute that 
		
		\begin{align*}
			\sum_{i\in S} h(u_i(q^*))
			= \sum_{i\in S} h \left(  \frac{1}{\ell+1}\sum_{j\in S} u_j(q^0) \right)
			>\sum_{i\in S} \frac{1}{\ell+1}\sum_{j\in S} h(u_j(q^0))
			=\sum_{i\in S} h(u_i(q')). 
		\end{align*}
		In summary, we conclude that $\sum_{i\in N} h(u_i(q^*))> \sum_{i\in N} h(u_i(q'))$, as desired. Consequently, the distribution $q$ chosen for $A'$ satisfies that $q(x_1)=q(y_1)=\dots=q(y_\ell)$. 
		
		Finally, by our analysis so far, we conclude that voter $i^*$ (who reports $C\setminus \{x_1\}$ in $A$) has utility $u_i(p)=\frac{1}{2}$ in $A$ and utility $u_i(q)=q(x_2)+\frac{\ell}{\ell+1} (1-q(x_2))=\frac{\ell}{\ell+1}+\frac{1}{\ell+1}q(x_2)$ in $A'$. For the second expression, we use that the probability not assigned to $x_2$ by $q$ is spread uniformly over the remaining $\ell+1$ candidates, and that voter $i^*$ approves $\ell$ of them. As a consequence, the incentive ratio of $f$ is lower bounded by \[\frac{u_i(q)}{u_i(p)}=\frac{\frac{\ell}{\ell+1}+\frac{1}{\ell+1}q(x_2)}{\frac{1}{2}}\geq \frac{2\ell}{\ell+1}.\]
		Finally, by letting $\ell$ grow to infinity, this term converges to $2$, which proves that the incentive ratio of $f$ is at least $2$.\medskip
		
		\textbf{Proof of Claim (3):}
		Let $f$ denote a rule that satisfies efficiency, group fair share, anonymity, neutrality, and independence of losers. Further, we fix some value $k\in \mathbb{N}$ with $k\geq 3$, define the set of voters by $N=[k+1]^2$, and the set of candidates by $C=\{x_i\colon i\in [k+1]\}\cup \{y_{i,j}\colon i,j\in [k]\}\cup \{z\}$. For this proof, we again denote voters by tuples $(i,j)$, where $i$ and $j$ are best interpreted as row and column indexes, respectively. Further, to simply notation, we define $X=\{x_i\colon i\in [k+1]\}$ and $Y=\{y_{i,j}\colon i,j\in [k]\}$. Lastly, for our proof, we will focus on three profiles. We next describe these profiles formally and note that they are illustrated in \Cref{fig:Claim3profs} for $k=3$. First, the main profile $A^1$ is constructed as follows:
		\begin{itemize}[topsep=0pt, itemsep=0pt]
			\item Every voter $(i,j)\in [k]^2$ approves candidate $x_i$ and $y_{i,j}$.
			\item For all $i\in [k]$ with $i>1$, each voter $(i, k+1)$ approves $\{z\}\cup Y\setminus \{y_{j,i}\colon j\in [k]]\}$. Further, voters $(1,k+1)$ and $(k+1, k+1)$ approve $X$.
			\item For all $j\in [k]$ with $j>1$, each voter $(k+1,j)$ only approves $x_{k+1}$. Further, voter $(1,k+1)$  reports $\{x_{k+1},z\}$.\smallskip
		\end{itemize}
		
		\begin{figure*}
			\centering\tabcolsep=4pt
			\scalebox{0.9}{
				\begin{tabular}{l|cccc}
					$A^1$: & $j=1$ & $j=2$ & $j=3$ & $j=4$\\\hline
					$i=1$ & $\{x_1, y_{1,1}\}$ &  $\{x_1, y_{1,2}\}$ &  $\{x_1, y_{1,3}\}$ & $\{x_1,x_2,x_3,x_4\}$\\
					$i=2$ & $\{x_2, y_{2,1}\}$ &  $\{x_2, y_{2,2}\}$ &  $\{x_2, y_{2,3}\}$ & $\{z, y_{1,1}, y_{2,1}, y_{3,1}, y_{1,3}, y_{2,3}, y_{3,3}\}$\\
					$i=3$ & $\{x_3, y_{3,1}\}$ &  $\{x_3, y_{3,2}\}$ &  $\{x_3, y_{3,3}\}$ & $\{z, y_{1,1}, y_{2,1}, y_{3,1}, y_{1,2}, y_{2,2}, y_{3,2}\}$\\
					$i=4$ & $\{z,x_4\} $ & $\{x_4\}$ & $\{x_4\}$ &  $\{x_1,x_2,x_3,x_4\}$
				\end{tabular}
				\hspace{0.5cm}
				\begin{tabular}{l|cccc}
					$A^2$: & $j=1$ & $j=2$ & $j=3$ & $j=4$\\\hline
					$i=1$ & $\{x_1, y_{1,1}\}$ &  $\{x_1\}$ &  $\{x_1\}$ & $\{x_1,x_2,x_3,x_4\}$\\
					$i=2$ & $\{x_2, y_{2,1}\}$ &  $\{x_2\}$ &  $\{x_2\}$ & $\{z, y_{1,1}, y_{2,1}, y_{3,1}\}$\\
					$i=3$ & $\{x_3, y_{3,1}\}$ &  $\{x_3\}$ &  $\{x_3\}$ & $\{z, y_{1,1}, y_{2,1}, y_{3,1}\}$\\
					$i=4$ & $\{z,x_4\} $ & $\{x_4\}$ & $\{x_4\}$ &  $\{x_1,x_2,x_3,x_4\}$
			\end{tabular}}\bigskip
			
			\scalebox{0.9}{
				\begin{tabular}{l|cccc}
					& $j=1$ & $j=2$ & $j=3$ & $j=4$\\\hline
					$i=1$ & $\{x_1, y_{1,1}\}$ &  $\{x_1, y_{1,2}\}$ &  $\{x_1, y_{1,3}\}$ & $\{x_1,x_2,x_3,x_4\}$\\
					$i=2$ & $\{x_2, y_{2,1}\}$ &  $\{x_2, y_{2,2}\}$ &  $\{x_2, y_{2,3}\}$ & $\{z, y_{1,1}, y_{2,1}, y_{3,1}, y_{1,3}, y_{2,3}, y_{3,3}\}$\\
					$i=3$ & $\{x_3, y_{3,1}\}$ &  $\{x_3, y_{3,2}\}$ &  $\{x_3, y_{3,3}\}$ & $\{z, y_{1,1}, y_{2,1}, y_{3,1}, y_{1,2}, y_{2,2}, y_{3,2}\}$\\
					$i=4$ & $\{z, y_{1,2}, y_{2,2}, y_{3,2}, y_{1,3}, y_{2,3}, y_{3,3}\}$ & $\{x_4\}$ & $\{x_4\}$ &  $\{x_1,x_2,x_3,x_4\}$
			\end{tabular}}
			\caption{Profiles for the proof of Claim (3) of \Cref{thm:nashopt}}
			\label{fig:Claim3profs}
		\end{figure*}
		
		We start by analyzing the distribution $p^1$ chosen by $f$ for $A^1$. To this end, we observe that for all $i\in [k]$, $j,j'\in [k]\setminus\{1\}$, the candidates $y_{i,j}$ and $y_{i,j'}$ are symmetric to each other in $A^1$. To make this more formal, we define the permutation $\pi^{j,j'}:C\rightarrow C$ for all $j,j'\in [k]\setminus\{1\}$ by \emph{(i)} $\pi^{j,j'}(y_{i,j})=y_{i,j'}$ and $\pi^{j,j'}(y_{i,j'})=y_{i,j}$ for all $i\in [k]$ and \emph{(ii)} $\pi^{j,j'}(c)=c$ for all other candidates $c$. Applying this permutation to $A^1$ has two effects: firstly, we exchange the $j$-th and $j'$-th columns, and, secondly, the voters $(j,k+1)$ and $(j',k+1)$ swap their ballots. Hence, up to renaming voters, the resulting profile is equal to $A^1$. By anonymity and neutrality, it thus follows that $p^1(y_{i,j})=p^1(y_{i,j'})$ for all $i\in [k]$ and $j,j'\in [k]\setminus \{1\}$. 
		
		Next, we will show that $p^1(y_{i,j})=0$ for all $i\in [k]$ and $j\in [k]\setminus\{1\}$. Assume for contradiction that this is not true, i.e., there are indices $i\in [k]$, $j\in [k]\setminus\{1\}$ such that $p^1(y_{i,j})=\epsilon>0$. By our previous observation, this means that $p^1(y_{i,j'})=\epsilon$ for all $j'\in [k]\setminus \{1\}$. We will show that $p^1$ is inefficient. To this end, we define the distribution $q$ by $q(x_i)=p^1(x_i)+\epsilon$, $q(y_{i,1})=p^1(y_{i,1})+(k-2)\epsilon$, $q(y_{i,j'})=0$ for all $j'\in [k]\setminus \{1\}$, and $q(c)=p^1(c)$ for all other candidates.
		We claim that $u_{i',j'}(q)\geq u_{i',j'}(p^1)$ for all voters and $u_{i',j'}(q)> u_{i',j'}(p^1)$ for at least one voter. To see the second claim, we note that 
		\begin{align*}
			u_{1,k+1}(q)&=\!\sum_{\ell \in [k+1]} q(x_\ell)
			=\!\epsilon+\sum_{\ell \in [k+1]} p^1(x_\ell)
			>u_{1,k+1}(p^1).
		\end{align*}
		
		On the other hand, to see that the utility of no voter decreases, we first observe that $u_{i',j'}(q)\geq u_{i',j'}(p^1)$ for all voters who do not approve any candidate $y_{i,j}$ with $j\in [k]\setminus \{1\}$. This leaves us with the voters $(i,\ell)$ for $\ell\in [k]\setminus \{1\}$ (i.e., those in the $i$-th row) and the voters $(\ell, k+1)$ for $\ell\in [k]\setminus \{1\}$ (i.e., those in the $k+1$-th column). For the voters $(i,\ell)$ with $\ell\in [k]\setminus \{1\}$, it holds that 
		\[u_{i,\ell}(q)=q(x_i)+q(y_{i,\ell})=p^1(x_i)+\epsilon+0=u_{i,\ell}(p^1). \]
		On the other hand, for the voters $(\ell, k+1)$ with $\ell\in [k]\setminus\{1\}$, we note that all of these voters approve $y_{i,1}$ and exactly $k-2$ candidates of $\{y_{i,2},\dots, y_{i,k}\}$. Thus, the utility of these voters satisfies that 
		\begin{align*}
			u_{\ell,k+1}(q)&=\sum_{c\in A_{\ell,k+1}^1} q(c)\\
			&=(k-2)\epsilon- \sum_{j'\in [k]\setminus \{\ell\}}  p^1(y_{i,j'})+\sum_{c\in A_{\ell,k+1}^1} p^1(c)\\
			&=u_{\ell,k+1}(p^1).
		\end{align*}
		This proves that $q$ indeed Pareto-dominates $p^1$, so our assumption that $p^1(y_{i,j})>0$ must have been wrong.  Put differently, this means that $p^1(y_i,j)=0$ for all $i\in [k]$, $j\in [k]\setminus\{1\}$. 
		
		Now, let $A^2$ denote the profile derived from $A^1$ by deleting all candidates $y_{i,j}$ with $i\in [k]$, $j\in [k]\setminus \{1\}$ from our profile and let $p^2$ denote the distribution chosen by $f$ for this profile. An example of $A^2$ for $k=3$ is depicted in \Cref{fig:Claim3profs}. It is easy to see that, in $A^2$, all candidates in $X$ and the candidates in $Y'=\{y_{1,1},\dots, y_{k,1},z\}$ are symmetric to each other, respectively. 
		In particular, this means that $p^2(x_1)=\dots=p^2(x_{k+1})$ and $p^2(y_{1,1})=\dots=p^2(y_{1,k})=p^2(z)$. For voter $(k+1,1)$, this means that $u_{k+1,1}(p^2)=p^2(z)+p^2(x_{k+1})=\frac{1}{k+1}\sum_{c\in X} p^2(c)+\frac{1}{k+1}\sum_{c\in Y'} p^2(c)=\frac{1}{k+1}$. Further, since we derived $A^2$ from $A^1$ by only deleting candidates with probability $0$ in $p^1$, independence of losers requires that $p^1=p^2$ and thus $u_{k+1,1}(p^1)=\frac{1}{k+1}$, too.
		
		Lastly, consider the profile $A^3$ derived from $A^1$ by assigning voter $(k+1, 1)$ the ballot $z\cup Y\setminus \{y_{1,1},\dots, y_{k,1}\}$ and let $p^3=f(A^3)$; an example of this profile is again shown in \Cref{fig:Claim3profs}.
		Using an analogous argument as for $A^1$, it can be shown that $p^3(y_{i,j})=p^3(y_{i,j'})$ for all $i,j,j'\in [k]$ based on anonymity and neutrality. Note that, in contrast to $A^1$, this inequality now also includes the candidates $y_{i,1}$. We claim that this means that $p^3(y_{i,j})=0$ for all $i,j\in [k]$. Assume for contradiction that there are some indices $i,j\in [k]$ such that $p^3(y_{i,j})>0$ and let $p^3(y_{i,j})=\epsilon$. By our previous observation, this implies that $p^3(y_{i,j'})=\epsilon>0$ for all $j'\in [k]$. Now, consider the distribution $q$ defined by $q(x_i)=p^3(x_i)+\epsilon$, $q(z)=p^3(z)+(k-1)\epsilon$, $q(y_{i,j'})=0$ for all $j'\in [k]$, and $q(c)=p^3(c)$ for all remaining candidates. First, we note that voter $(1,k+1)$ strictly prefers $q$ to $p^3$ since $\sum_{x\in X} q(x)=\epsilon +\sum_{x\in X} p^3(x)$. Next, for every voter who does not approve any candidate in $\{y_{i,1},\dots, y_{i,k}\}$, it holds that $u_{i',j'}(q)\geq u_{i',j'}(p^3)$ because we only reduce the shares of the candidates in this set. This leaves us with the voters in the $i$-th row (i.e,. $(i,\ell)$ with $\ell\in [k]$), the voters in the $(k+1)$-th column (i.e., $(\ell, k+1)$ with $\ell\in [k]\setminus\{1\}$), and voter $(k+1,1)$. Now, for all voters $(i,\ell)$ with $\ell\in[k]$, it holds that 
		\begin{align*}
			u_{i,\ell}(q)=q(x_i)+q(y_{i,\ell}) = p^3(x_i)+\epsilon+0=u_{i,\ell}(p^3).
		\end{align*}
		Further, for the the voters $(\ell, k+1)$ with $\ell\in [k]\setminus\{1\}$ and voter $(k+1,1)$, it holds that they approve precisely $k-1$ candidates in $Y_i=\{y_{i,1},\dots, y_{i,k}\}$. Hence, we get for all these voters that 
		\begin{align*}
			u_{\ell,k+1}(q)=\sum_{c\in A_{\ell,k}^1} q(c)
			\ = \ (k-1)\epsilon \ - \ \sum_{y_{i,j'}\in Y_i\cap A_{\ell,k+1}^1} \epsilon \ + \ \sum_{c\in A_{\ell,k+1}^1} p^3(c)
			\ = \ u_{\ell, k+1}(p^3).
		\end{align*}
		This proves that $q$ indeed dominates $p^3$, which contradicts the efficiency of $f$. Hence, it holds indeed that $p^3(y_{i,j})=0$ for all $i,j\in [k]$.
		
		Lastly, GFS for the set of voters $S=\{(k+1,j)\colon j\in [k]\}\cup \{(i,k+1)\colon i\in [k]\setminus \{1\}\}$ requires that the total share allocated to approved candidates of these voters is at least $\frac{|S|}{n}$. Since $p^3(y)=0$ for all $y\in Y$, this means that $p^3(x_{k+1})+p^3(z)\geq \frac{|S|}{n}=\frac{2k-1}{(k+1)^2}$. Hence, it holds that $u_{k+1,1}(p^3)\geq \frac{2k-1}{(k+1)^2}$ for the true utility function of voter $(k+1,1)$. 
		
		This means for the incentive ratio of $f$ is lower bounded by 
		\begin{align*}
			\frac{u_{k+1,1}(p^3)}{u_{k+1,1}(p^1)}=\frac{(2k-1)/(k+1)^2}{1/(k+1)}=\frac{2k-1}{k+1}. 
		\end{align*}
		
		For $k\rightarrow\infty$, this term approaches $2$, thus showing that the incentive ratio of $f$ is at least $2$.
	\end{proof}
	
	\begin{remark}
		It is possible to design distribution rules with an incentive ratio of $\gamma$ for every $\gamma\in(1,2]$ by mixing \nash with the uniform distribution. Specifically, consider the rule $f_\lambda(A)=\lambda \text{\nash}(A)+(1-\lambda)p_\mathit{uni}$ for $\lambda\in (0,1)$, which returns the convex combination of \nash and the uniform distribution $p_\mathit{uni}$ over all candidates. For example, when setting $\lambda=\frac{1}{m+1}$, one can show that $f_\lambda$ has an incentive ratio between $\frac{6}{5}$ and $\frac{4}{3}$, where the lower bounds follows by using the example of Claim (1) of \Cref{thm:nashopt}. However, to obtain a notable reduction in the incentive ratio, one needs to use a very small $\lambda$, so we lose the desirable properties of \nash. 
	\end{remark}
	
	\section{Experiments} 
	
	As our last contribution, we experimentally evaluate the manipulability of the four distribution rules in \Cref{subsec:rules}. In particular, we are interested in how often and how severely our distribution rules are manipulable in more realistic profiles. 
	
	To answer this question, we sample $1000$ approval profiles for each number of voters $n\in \{10,20,\dots,100\}$ and $m=10$ candidates. For each rule $f\in \{\text{\nash}, \text{\egal}, \text{\map}, \text{\fut}\}$ and each sampled profile $A$, we then compute the maximal multiplicative utility gain of a voter, defined by $IR(f,A)=\max_{i, A_i'} \frac{u_i(f(A_{-i}, A_i'))}{u_i(f(A))}$. We note that a rule $f$ is manipulable in a profile $A$ if and only if $IR(f,A)>1$. We repeat these simulations with the following three distributions. 
	
	\paragraph{Impartial culture model.} Each voter approves each candidate with probability $p=0.3$ independently at random. 
	
	\paragraph{Euclidean model.} Voters and candidates are placed uniformly at random in the 3d unit ball and voters approve all candidates with $\ell_2$-distance at most $0.4$ from their~position. 
	
	\paragraph{Truncated Mallow's model} In this model, we sample for each voter $i$ a strict preference relation $\succ_i$ from Mallow's model with a dispersion parameter of $\phi=0.75$ and a threshold $t_i\in \{1,\dots, m-1\}$ from a uniform distribution. Each voter approves the first $t_i$ candidates in her preference relation $\succ_i$.\smallskip

	Our findings of these experiments are summarized in \Cref{fig:experimenst1,fig:experimenst2}. Specifically, in \Cref{fig:experimenst1} we show the average incentive ratio and the percentage of manipulable profiles of our rules when $n=100$ and $m=10$. Further, we display the average incentive ratio of our rules as a function of~$n$ for the euclidean model in 
	\Cref{fig:experimenst2}. More details, including additional discussions and statistics, can be found in \Cref{app:experiments}.

	Maybe the most striking insight of our experiments is that, for all distributions, all our rules are frequently manipulable, with \egal and \nash being almost always manipulable. A possible explanation for this is that each of our rules satisfies some fairness axioms, thereby guaranteeing that each voter can influence the outcome and thus also manipulate. 
	Secondly, we note that, while \nash is frequently manipulable, its incentive ratio is very low with an average of consistently less than~$1.2$. This shows that voters indeed have little incentive to manipulate under \nash. Surprisingly, the incentive ratio of \fut and its percentage of manipulable profiles are even less in our experiments. Hence, while \fut theoretically allows for severe manipulations, the corresponding instances may not occur in practice. However, we note that the maximum incentive ratio of \fut in our simulations (typically between $2$ and $3$) is usually higher than that of \nash (typically around~$1.5$).
	Lastly, both \map and \egal are severely manipulable in our experiment. For \egal, the reason is that, in almost all profiles, some voter can notably increase her utility via free-riding (i.e., only voting for approved candidates that obtained no share before the manipulation). By contrast, the average incentive ratio of \map is driven by instances where it is extremely manipulable. For example, we even encountered a profile $A$ with $IR(\text{\map}, A)=37$ in our experiments. These large incentive ratios occur consistently for \map as even the $90\%$ percentile of the incentive ratio is often very large, which suggests that this rule can be severely manipulated on realistic profiles.
	
	\begin{figure}
		\setlength{\tabcolsep}{0.5em}	
		\centering
		\pgfplotstableread[col sep=semicolon]{./simulations/EUC.csv}\euctable
		\pgfplotstableread[col sep=semicolon]{./simulations/IC.csv}\unitable
		\pgfplotstableread[col sep=semicolon]{./simulations/MALL.csv}\malltable
		\pgfkeys{/pgf/number format/.cd, assume math mode=false}
		
		\begin{tabular}{c cc cc cc}
			\toprule
			& \multicolumn{2}{c}{Impartial} & \multicolumn{2}{c}{Euclidean}  & \multicolumn{2}{c}{Mallows}\\
			Rule & Freq & IR & Freq & IR& Freq & IR\\
			\midrule 
			\nash 
			& \pgfplotstablegetelem{9}{nash_freq}\of{\unitable}
			\pgfmathparse{100*\pgfplotsretval}
			\pgfmathprintnumber[fixed,precision=0]{\pgfmathresult}\%
			& \pgfplotstablegetelem{9}{nash_avg}\of{\unitable}
			\pgfmathprintnumber[fixed,precision=2]{\pgfplotsretval}
			& \pgfplotstablegetelem{9}{nash_freq}\of{\euctable}
			\pgfmathparse{100*\pgfplotsretval}
			\pgfmathprintnumber[fixed,precision=0]{\pgfmathresult}\%
			& \pgfplotstablegetelem{9}{nash_avg}\of{\euctable}
			\pgfmathprintnumber[fixed,precision=2]{\pgfplotsretval}
			& \pgfplotstablegetelem{9}{nash_freq}\of{\malltable}
			\pgfmathparse{100*\pgfplotsretval}
			\pgfmathprintnumber[fixed,precision=0]{\pgfmathresult}\%
			& \pgfplotstablegetelem{9}{nash_avg}\of{\malltable}
			\pgfmathprintnumber[fixed,precision=2]{\pgfplotsretval}
			\\
			\egal 
			&\pgfplotstablegetelem{9}{egal_freq}\of{\unitable}
			\pgfmathparse{100*\pgfplotsretval}
			\pgfmathprintnumber[fixed,precision=0]{\pgfmathresult}\%
			&\pgfplotstablegetelem{9}{egal_avg}\of{\unitable}
			\pgfmathprintnumber[fixed,precision=2]{\pgfplotsretval} 
			&  \pgfplotstablegetelem{9}{egal_freq}\of{\euctable}
			\pgfmathparse{100*\pgfplotsretval}
			\pgfmathprintnumber[fixed,precision=0]{\pgfmathresult}\%
			&\pgfplotstablegetelem{9}{egal_avg}\of{\euctable}
			\pgfmathprintnumber[fixed,precision=2]{\pgfplotsretval} 
			& \pgfplotstablegetelem{9}{egal_freq}\of{\malltable}
			\pgfmathparse{100*\pgfplotsretval}
			\pgfmathprintnumber[fixed,precision=0]{\pgfmathresult}\%
			&\pgfplotstablegetelem{9}{egal_avg}\of{\malltable}
			\pgfmathprintnumber[fixed,precision=2]{\pgfplotsretval} 
			\\
			\fut 
			& \pgfplotstablegetelem{9}{fut_freq}\of{\unitable}
			\pgfmathparse{100*\pgfplotsretval}
			\pgfmathprintnumber[fixed,precision=0]{\pgfmathresult}\%
			&\pgfplotstablegetelem{9}{fut_avg}\of{\unitable}
			\pgfmathprintnumber[fixed,precision=2]{\pgfplotsretval} 
			&  \pgfplotstablegetelem{9}{fut_freq}\of{\euctable}
			\pgfmathparse{100*\pgfplotsretval}
			\pgfmathprintnumber[fixed,precision=0]{\pgfmathresult}\%
			&\pgfplotstablegetelem{9}{fut_avg}\of{\euctable}
			\pgfmathprintnumber[fixed,precision=2]{\pgfplotsretval} 
			&  \pgfplotstablegetelem{9}{fut_freq}\of{\malltable}
			\pgfmathparse{100*\pgfplotsretval}
			\pgfmathprintnumber[fixed,precision=0]{\pgfmathresult}\%
			&\pgfplotstablegetelem{9}{fut_avg}\of{\malltable}
			\pgfmathprintnumber[fixed,precision=2]{\pgfplotsretval} 
			\\
			\map 
			&  \pgfplotstablegetelem{9}{mp_freq}\of{\unitable}
			\pgfmathparse{100*\pgfplotsretval}
			\pgfmathprintnumber[fixed,precision=0]{\pgfmathresult}\%
			& \pgfplotstablegetelem{9}{mp_avg}\of{\unitable}
			\pgfmathprintnumber[fixed,precision=2]{\pgfplotsretval}
			& \pgfplotstablegetelem{9}{mp_freq}\of{\euctable}
			\pgfmathparse{100*\pgfplotsretval}
			\pgfmathprintnumber[fixed,precision=0]{\pgfmathresult}\%
			& \pgfplotstablegetelem{9}{mp_avg}\of{\euctable}
			\pgfmathprintnumber[fixed,precision=2]{\pgfplotsretval}
			& \pgfplotstablegetelem{9}{mp_freq}\of{\malltable}
			\pgfmathparse{100*\pgfplotsretval}
			\pgfmathprintnumber[fixed,precision=0]{\pgfmathresult}\%
			& \pgfplotstablegetelem{9}{mp_avg}\of{\malltable}
			\pgfmathprintnumber[fixed,precision=2]{\pgfplotsretval}
			\\
			\bottomrule
		\end{tabular}

		\caption{Summary of our experiments. For each rule, we show the percentage of manipulable profiles (Freq) and the average incentive ratio (IR, rounded to two decimal digits) for our three sampling models when $n=100$ and $m=10$. }
		\label{fig:experimenst1}
	\end{figure}
	
	\begin{figure}
		\centering
		\pgfplotstableread[col sep=semicolon]{./simulations/EUC.csv}\datatable
		\edef\nashcol{nash_avg}%
		\edef\egalcol{egal_avg}%
		\edef\futcol{fut_avg}%
		\edef\mpcol{mp_avg}%
		\begin{tikzpicture}
			\begin{axis}[
				width=6.6cm,
				height=3.8cm,
				x tick label style={font=\small},
				xtick=data,
				xticklabels from table={\datatable}{n},
				legend style={at={(1.5,1)},anchor=north east},
				y tick label style={font=\small,
				}
				]    
				\addplot [name path = nash, mark=x, blue!80, opacity=0.8] table [x expr=\coordindex, y=\nashcol]{\datatable};
				\addlegendentry{NASH};
				
				\addplot [name path = egal, mark=x, red!95!black, opacity=0.8] table [x expr=\coordindex, y=\egalcol]{\datatable};
				\addlegendentry{EGAL};

				\addplot [name path = fut, mark=x, green!50!black, opacity=0.8] table [x expr=\coordindex, y=\futcol]{\datatable};
				\addlegendentry{FUT};
				
				\addplot [name path = mp, mark=x, purple!60!black, opacity=0.8 ] table [x expr=\coordindex, y=\mpcol]{\datatable};
				\addlegendentry{MP};
			\end{axis}
		\end{tikzpicture}
		\caption{Average incentive ratio of our rules in dependence of the number of voters in the euclidean model.}
		\label{fig:experimenst2}
	\end{figure}

	\section{Conclusion}
	
	In this paper, we analyze whether there are approval-based distribution rules that are fair, efficient, and approximately strategyproof. To this end, we study the incentive ratio of such rules, which quantifies the maximal multiplicative utility gain of a voter when manipulating. While it turns out that several known rules have a large incentive ratio, we prove that the Nash product rule (\nash) has an incentive ratio of $2$. Since this rule is efficient and satisfies the strongest fairness conditions in the literature, this answers our research question in the affirmative. We further show that the incentive ratio of \nash is optimal subject to some of its desirable properties. Lastly, we conduct an experimental analysis to understand the manipulability of our rules for realistic profiles. 
	
	Our paper points to several appealing follow-up questions. Firstly, we find it interesting to examine the incentive ratio also for other public good settings where strategyproofness is often challenging to satisfy. Further, for approval-based budget division, it may be desirable to analyze the tradeoff between strategyproofness, efficiency, and fairness in more depth by simultaneously approximating some of these properties.

\section*{Acknowledgements}
We thank Florian Brandl for pointing out the incentive ratio to us as well as the anonymous IJCAI reviewers for their helpful feedback. 
This work was mainly conducted while Patrick Lederer and Jeremy Vollen were at UNSW Sydney. During that time, all of the authors were supported by the NSF-CSIRO project on ``Fair Sequential Collective Decision Making".

\clearpage
	
	\appendix
	
	\section{Omitted Proofs from \Cref{subsec:notNash}}\label{app:notNash}
	
	In this appendix, we prove Claims (2) and (3) of \Cref{thm:nonash}.
	
	\nonash*
	\begin{proof} Since Claim (1) has been shown in the main body, we focus on the results regarding \fut and \egal. We discuss the corresponding proofs separately.\medskip
		
		\textbf{\fut}: For the upper bound, we note that \fut is known to satisfy GFS \citep{BBPS21a}, so it holds for every profile~$A$ and voter $i$ that $u_i(\text{\fut}(A))\geq \frac{1}{n}$. Since the maximal utility of every voter is $1$, it immediately follows that the incentive ratio of \fut is at most~$n$.  
		
		For the lower bound, we choose an integer $k\geq 6$ and consider the following two profiles $\mathcal{A}$ and $\mathcal{A}'$ with four candidates $C=\{a,b,c,d\}$ and $2k+7$ voters. We note that these profiles only differ in the approval ballot of the left-most voter $i$. \smallskip
		
		\noindent\begin{tabular}{llllllll}
			${A}$: & $1$: $\{a\}$ & $k-2$: $\{a,b\}$ & $2$: $\{b\}$ & $2$: $\{b,d\}$ \\
			& $3$: $\{a,c\}$ & $k$: $\{c,d\}$ & $1$: $\{d\}$\smallskip\\
			${A}'$: & $1$: $\{a,d\}$ & $k-2$: $\{a,b\}$ & $2$: $\{b\}$ & $2$: $\{b,d\}$ \\
			& $3$: $\{a,c\}$ & $k$: $\{c,d\}$ & $1$: $\{d\}$
		\end{tabular}
		
		We claim that FUT selects the distribution $p$ with $p(a)=\frac{1}{2k+7}$, $p(b)=\frac{k}{2k+7}$, and $p(c)=p(d)=\frac{6+k}{4k+14}$ for $A$. To see this, we note that both $c$ and $d$ maximize the approval score with ${k+3}$ approvals each. Moreover, there are $k$ voters who approve both $c$ and $d$ and thus split their shares between these candidates, so $p(c)=p(d)=\frac{6+k}{2\cdot(2k+7)}$. After this step, $k$ voters approving $b$ and $k-1$ voters approving $a$ remain active. Since both candidates has the same approval scores in~$A$, candidate $b$ will be assigned its share next. In particular, all voters reporting $\{a,b\}$ send their $\frac{1}{n}$ budget to $b$, so $p(b)=\frac{k}{2k+7}$. Finally, $a$ only gets the share of the single voter who reports $\{a\}$, so $p(a)=\frac{1}{2k+7}$. 
		
		On the other hand, FUT chooses the distribution $q$ with $q(a)=\frac{k+1}{2k+7}$, $q(b)=\frac{2}{2k+7}$, $q(c)=0$, and $q(d)=\frac{k+4}{2k+7}$ for~${A}'$. To see this, we observe that $d$ has now the uniquely maximal approval score of $k+4$. Hence, all voters approving $d$ send their share to this candidate and $q(d)=\frac{k+4}{2k+7}$. This leaves $k+1$ active voters approving $a$, $k$ active voters approving $b$, and $3$ active voters approving $c$. Consequently, $a$ is chosen next when $\lambda=\frac{k+3}{k+1}$ because $(k+1)\lambda +1 = k+4$. We also note that, for this $\lambda$, it holds that $k\lambda +2 =k\frac{k+3}{k+1}+2=k+2+\frac{2k}{k+1}<k+4$ and $3\lambda + k=3\frac{k+3}{k+1}+k<
		3\cdot\frac{4}{3}+k=k+4$ as $k\geq 6$. These inequalities show that $a$ is indeed assigned its share before $b$ and $c$, so $q(a)=\frac{k+1}{2k+7}$. Finally, as no active voter approves $c$ after this step, the remaining budget goes to~$b$ and $q(b)=\frac{2}{2k+7}$. 
		
		Lastly, we consider the utility of the left-most voter $i$ in $\mathcal A$ and $\mathcal A'$. Assuming that $A$ is the truthful profile, we have that $u_i(p)=\frac{1}{2k+7}$ and $u_i(q)=\frac{k+1}{2k+7}$, so the incentive ratio $\gamma$ of \fut is lower bounded by $\frac{u_i(q)}{u_i(p)}=k+1=\frac{n-5}{2}$. This completes the proof that $\gamma(\text{\fut})\in\Theta(n)$.\medskip
		
		\textbf{\egal}: 
		For the upper bound on the incentive ratio of \egal, we note that, under this rule, every voter has a utility of at least $\frac{1}{m}$. In more detail, it holds for the uniform distribution over all candidates $p_\textit{uni}$ that $u_i(p_\textit{uni})\geq \frac{1}{m}$ for all $i\in N$. Since \egal maximizes the minimal utility of a voter, this means that $u_i(p)\geq \frac{1}{m}$ for the distribution $p$ chosen by \egal in every profile. Thus, the incentive ratio of \egal is at most $m$ since the maximal utility of a voter is $1$. 
		
		For the lower bound, we choose an integer $k\geq 3$ and consider the following profile $A$ with $k^2$ voters and $2k+2$ candidates $C=\{x_1,\dots, x_k, y_1,\dots, y_{k+2}\}$. For notational convenience, we define the set of voters as $N=[k]^2$, i.e., every voter is denoted by at tuple $(i,j)$, where $i$ is best interpreted as row index and $j$ as column index. 
		\begin{itemize}[topsep=0pt, itemsep=0pt]
			\item Voter $(1,1)$ reports $\{x_1,y_1,\dots, y_{k+2}\}$.
			\item For all $i\in [k]$, $j\in [k-1]$ with $(i,j)\neq (1,1)$, voter $(i,j)$ reports $\{x_i,y_j\}$.
			\item For all $i\in [k-2]$, the voters $(i,k)$ report $\{x_i, y_k, y_{k+1}\}$.
			\item Voter $(k-1,k)$ reports $\{x_{k-1}, y_{k+1}, y_{k+2}\}$ and voter $(k,k)$ reports $\{x_k, y_{k}, y_{k+2}\}$.
		\end{itemize}
		Further, the profile $A'$ is derived from $A$ by changing the ballot of voter $(1,1)$ to $\{y_1\}$. We subsequently display the profiles $A$ and $A'$ for $k=4$. 
		\begin{center}
			\tabcolsep=4pt
			\begin{tabular}{@{}l|cccc@{}}
				$A$: & $j=1$ & $j=2$ & $j=3$ & $j=4$\\\hline
				$i=1$ & $\{x_1,y_1,\dots, y_6\}$ & $\{x_1,y_2\}$ & $\{x_1,y_3\}$ & $\{x_1,y_4,y_5\}$\\
				$i=2$ & $\{x_2,y_1\}$ & $\{x_2,y_2\}$ & $\{x_2,y_3\}$ & $\{x_2,y_4,y_5\}$ \\
				$i=3$ & $\{x_3,y_1\}$ & $\{x_3,y_2\}$ & $\{x_3,y_3\}$ & $\{x_3,y_5,y_6\}$ \\
				$i=4$ & $\{x_4,y_1\}$ & $\{x_4,y_2\}$ & $\{x_4,y_3\}$ & $\{x_4,y_4,y_6\}$ \\
			\end{tabular}\medskip
			
			\begin{tabular}{@{}l|cccc@{}}
				$A'$: & $j=1$ & $j=2$ & $j=3$ & $j=4$\\\hline
				$i=1$ & $\{y_1\}$ & $\{x_1,y_2\}$ & $\{x_1,y_3\}$ & $\{x_1,y_4,y_5\}$\\
				$i=2$ & $\{x_2,y_1\}$ & $\{x_2,y_2\}$ & $\{x_2,y_3\}$ & $\{x_2,y_4,y_5\}$ \\
				$i=3$ & $\{x_3,y_1\}$ & $\{x_3,y_2\}$ & $\{x_3,y_3\}$ & $\{x_3,y_5,y_6\}$ \\
				$i=4$ & $\{x_4,y_1\}$ & $\{x_4,y_2\}$ & $\{x_4,y_3\}$ & $\{x_4,y_4,y_6\}$ \\
			\end{tabular}
		\end{center}
		
		We claim that \egal chooses for $A$ the distribution $p$ with $p(x_i)=\frac{1}{k}$ for all $i\in [k]$. To see this, we observe that $p$ guarantees each voter a utility of $\frac{1}{k}$. 
		We will next show that $p$ is indeed the only distribution that satisfies this. To this end, we let $p'$ denote a distribution such that $u_{i,j}(p')\geq \frac{1}{k}$ for all $(i,j)\in N$ and show that $p'=p$. Consider the following three sets of voters for this: $S_1=\{(i,k+1-i)\colon i\in [k]\}$ are the voters on the counterdiagonal, $S_2=\{(i,k-i)\colon i\in [k-1]\}\cup \{(k,k)\}$ are the voters that are one ''above'' the counterdiagonal and $(k,k)$, and $S_3= \{(i,k-i)\colon i\in [k-2]\}\cup \{(k,1),(k-1,k)\}$ is derived from $S_2$ by replacing the voters $(k-1,1)$ and $(k,k)$ with $(k,1)$ and $(k-1,k)$. Since each set contains $k$ voters, it holds that $\sum_{(i,j)\in S_\ell} u_{i,j}(p')\geq k \cdot \frac{1}{k}=1$ for all $\ell\in \{1,2,3\}$. Next, each set contains for each row $i$ and each column $j$ exactly one voter, and none of them contains voter $(1,1)$. By the construction of $A$, this means for each $\ell\in \{1,2,3\}$ that the approval ballots of the voters in $S_\ell$ are pairwise disjoint. We thus compute
		{\allowdisplaybreaks   
			\begin{align*}
				&1\leq \sum_{(i,j)\in S_1} u_{i,j}(p')=\sum_{i\in [k]} p'(x_i) + \sum_{j\in [k+2]\setminus \{k+2\}} p'(y_j),\\
				&1\leq\sum_{(i,j)\in S_2} u_{i,j}(p')=\sum_{i\in [k]} p'(x_i) + \sum_{j\in [k+2]\setminus \{k+1\}} p'(y_j),\\
				&1\leq\sum_{(i,j)\in S_3} u_{i,j}(p')=\sum_{i\in [k]} p'(x_i) + \sum_{j\in [k+2]\setminus \{k\}} p'(y_j).
			\end{align*}
		}
		Since $p'$ is a distribution, the shares of all candidates sum up to $1$, so we infer form these inequalities that $p'(y_{k+2})=p'(y_{k+1})=p'(y_k)=0$. This means for the voters in the $k$-th column that $u_{i,k}(p')=p'(x_i)$. Because $u_{i,k}(p')\geq \frac{1}{k}$ for all $i\in [k]$, we hence derive that $p'(x_i)=\frac{1}{k}$ for all $i\in [k]$ and $p'=p$. 
		
		Next, for the profile $A'$, \egal chooses the distribution $q$ with $q(y_j)=\frac{2}{2k+1}$ for all $j\in [k-1]$ and $q(y_j)=\frac{1}{2k+1}$ for $j\in \{k,k+1,k+2\}$. We first observe that this distribution guarantees each voter in $A'$ a utility of $\frac{2}{2k+1}$. We will again show that $q$ is the only distribution that meets this condition. Thus, we fix a distribution $q'$ such that $u_{i,j}(q')\geq\frac{2}{2k+1}$ for all $(i,j)$ in $A'$ and show that $q'=q$. Similar to the last case, we consider two sets of voters: $S_1=\{(i,i)\colon i\in [k]\}$ are the voters on the main diagonal and $S_2=\{(i,i)\in [k-2]\}\cup \{(k-1,k), (k,k-1)\}$ replaces the voters $(k-1,k-1)$ and $(k,k)$ with $(k-1,k)$ and $(k,k-1)$. Now, we have again $k$ voters per set, so $\sum_{(i,j)\in S_\ell} u_{i,j}(q')\geq k\cdot \frac{2}{2k+1}$ for $\ell \in \{1,2\}$. Moreover, both sets contain exactly one voter for each row and each column and both of them contain voter $(1,1)$. We hence compute that 
		\begin{align*}
			&\frac{2k}{2k+1}\leq \sum_{(i,j)\in S_1} u_{i,j}(q')=\sum_{i=2}^k q'(x_i) + \!\!\!\!\sum_{j\in [k+2]\setminus \{k+1\}}  \!\!\!\!q'(y_j),\\
			&\frac{2k}{2k+1}\leq\sum_{(i,j)\in S_2} u_{i,j}(q')=\sum_{i=2}^k q'(x_i) + \!\!\!\!\sum_{j\in [k+2]\setminus \{k\}} \!\!\! q'(y_j).
		\end{align*}
		
		Since $q'$ is a distribution, we infer from the first inequality that $q'(x_1)+q'(y_{k+1})\leq \frac{1}{2k+1}$ and from the second one that $q'(x_1)+q'(y_{k})\leq \frac{1}{2k+1}$. On the other hand, it holds that $u_{1,k}(q')=q'(x_1)+q'(y_k)+q'(y_{k+1})\geq \frac{2}{2k+1}$. These three inequalities (and the condition that $q'(x_1)\geq 0$) are only satisfied if $q'(x_1)=0$ and $q'(y_k)=q'(y_{k+1})=\frac{1}{2k+1}$. Further, since $q'(x_1)=0$, it follows for all $j\in [k-1]$ that $u_{1,j}(q')=q'(y_j)$. Hence, we conclude that $q'(y_j)\geq \frac{2}{2k+1}$ for all $j\in [k-1]$. Lastly, the unallocated budget so far is at most $1-\sum_{j\in [k+1]} q'(y_j)\leq \frac{1}{2k+1}$, which needs to be used so that the voters $(k-1,k)$ and $(k,k)$ meet the utility threshold of $\frac{2}{2k+1}$. In the current allocation, both voters obtain a utility of precisely $\frac{1}{2k+1}$ from $y_{k}$ and $y_{k+1}$, respectively. Hence, we need to allocate the remaining budget to $y_{k+2}$ as this is the only commonly approved candidate for both voters. Thus, we conclude that $q'=q$, as desired. 
		
		As the last point, we compute the utility gain for voter $(1,1)$ when deviating from $A$ to $A'$. To this end, we note that utility of this voter in $A$ is $u_{1,1}(p)=p(x_1)=\frac{1}{k}$, whereas her utility for $q$ is $u_{1,1}(q)=\sum_{j\in [k+2]} q(y_j)=1$. Hence, the incentive ratio $\gamma$ of \egal is lower bounded by $\frac{u_{1,1}(q)}{u_{1,1}(p)}=k=\frac{m-2}{2}$. This proves that $\gamma(\text{\egal})\in\Theta(m)$.
	\end{proof}
	
	\section{Omitted Proofs from \Cref{subsec:Nash}}\label{app:Nash}
	
	We now turn to the proof of \Cref{lem:nashaux}.
	
	\nashaux*
	\begin{proof} Fix a concave utility profile $U$ and let $p$ denote the distribution chosen by \nash for $U$. We will prove the three claims of this lemma separately.\medskip
		
		\textbf{Proof of Claim (1):} As our first claim, we will show that $u_i(p)>0$ for all voters $i\in N$. To this end, we observe that, since $u_i$ is non-negative and not always $0$, there is for each voter $i$ a distribution $q_i$ such that $u_i(q_i)>0$. We will show that this implies that $u_i(p_\mathit{uni})>0$ for all $i\in N$, where $p_\mathit{uni}$ is the distribution that assigns a share of $\frac{1}{m}$ to each candidate. 
		To see this, let $p_x$ be the distribution assigning probability $1$ to $x$, for each candidate $x\in C$. Then, it holds that 
		\[p_\mathit{uni}=\frac{1}{m}q_i + \sum_{x\in C} \frac{1-q_i(x)}{m} p_x.\]
		By the concavity and non-negativity of $u_i$, we derive that $u_i(p_\mathit{uni})\geq \frac{1}{m}u_i(q_i) + \sum_{x\in C} \frac{1-q_i(x)}{m} u_i(p_x)\geq \frac{1}{m}u_i(q_i) >0$. Since this holds for every voter $i\in N$, this means that $\prod_{i\in N} u_i(p_\mathit{uni})>0$. Finally, because $p$ maximizes the product of the voters' utilities, $\prod_{i\in N} u_i(p)\geq\prod_{i\in N} u_i(p_\mathit{uni})>0$ and thus also $u_i(p)>0$ for all $i\in N$. \medskip
		
		\textbf{Proof of Claim (2):} For proving this claim, we let $q$ denote an arbitrary distribution; our goal is to show that $\sum_{i\in N}\frac{u_i(q)-u_i(p)}{u_i(p)}\leq 0$. To this end, we first observe that, by Claim (1), $u_i(p)>0$ for all voters $i\in N$, so this sum is well-defined. Next, we consider the line segment $x_t=(1-t)p+tq$ for $t\in [0,1]$, which connects $p$ and $q$ in the simplex. Further, for every voter $i\in N$, we define by $\phi_i(t)=u_i(x_t)$ her utility for $x_t$ and by $s_i(t)=\frac{\phi_i(t)-\phi(0)}{t}$ for $t\in (0,1]$ the slope in her utility. Now, we fix a voter $i\in N$ and aim to prove that the right limit of $s_i$ at $0$ exists and is finite, i.e., $\lim_{t\downarrow 0} s_i(t)\in\mathbb{R}$. To this end, we observe that $\phi_i$ is concave as it is merely a different representation of the voter's utility along the line segment between $p$ and $q$. 
		Consequently, $s_i$ is non-increasing, because it holds for all $t',t\in (0,1]$ with $t'<t$ that 
		{\allowdisplaybreaks
			\begin{align*}
				s_i(t')&=\frac{\phi_i(t')-\phi_i(0)}{t'}\\
				&=\frac{\phi_i\left(\frac{t'}{t} t + (1-\frac{t'}{t})0\right)-\phi_i(0)}{t'}\\
				&\geq \frac{\frac{t'}{t} \phi_i(t)+(1-\frac{t'}{t})\phi_i(0) - \phi_i(0)}{t'}\\
				&=\frac{\phi_i(t)-\phi_i(0)}{t}\\
				&=s_i(t).
			\end{align*}
		}
		
		Next, we show that $s_i$ is also bounded. Assume for contradiction that this is not true. Since $s_i$ is bounded from below by $s_i(1)$ (as $s_i$ is non-increasing), there is for every value $r\in\mathbb{R}$ a point $t\in (0,1]$ such that $s_i(t)>r$. Moreover, since $s_i$ is non-increasing, we may assume that $t$ is arbitrarily close to $0$. Our goal is to show that this contradicts that $p$ maximizes the Nash welfare. To this end, let $M$ be a constant such that $\phi_i(t)\leq M$ for all $t\in [0,1]$. 
		Such a constant exists since $\phi_i$ is a concave, non-negative, and real-valued function. In particular, concavity shows for every $\delta\in [0,\frac{1}{2}]$ that $\phi_i(\frac{1}{2})\geq \frac{1}{2}\phi_i(\frac{1}{2}-\delta)+\frac{1}{2}\phi_i(\frac{1}{2}+\delta)$. 
		By non-negativity, this means that $2\phi_i(\frac{1}{2})\geq \phi_i(\frac{1}{2}-\delta)$ and $2\phi_i(\frac{1}{2})\geq \phi_i(\frac{1}{2}+\delta)$ for $\delta \in [0,\frac{1}{2}]$ and thus $2\phi_i(\frac{1}{2})\geq \phi_i(x)$ for all $x\in [0,1]$. 
		
		We will choose a value $t^*\in (0,\frac{1}{2}]$ such that $s_i(t^*)>2nM$ and show that the distribution $x_{t^*}$ has a higher Nash welfare than $x_0=p$. To this end, we observe that, by concavity and non-negativity of the individual utility functions, it holds that $u_j(x_{t^*})\geq (1-t^*)u_j(x_0)>0$ for all $j\in N$. 
		Hence, it suffices to prove that $\sum_{j\in N} \log u_j(x_{t^*})>\sum_{j\in N} \log u_j(x_0)$, or, equivalently, that $\frac{1}{t^*}\sum_{j\in N} (\log u_j(x_{t^*})- \log u_j(x_0))>0$. We first consider the voters $j\in N\setminus \{i\}$. For each of these voters, it holds that $ \log u_j(x_{t^*})- \log u_j(x_0)\geq  \log (1-t^*)u_j(x_0)- \log u_j(x_0)=\log \frac{(1-t^*) u_j(x_0)}{u_j(x_0)}=\log (1-t^*)$. Further, it holds for every $t\in (0,1)$ that $\log 1-t\geq -\frac{t}{1-t}$. Since $t^*\in (0,\frac{1}{2}]$, it follows that $\log1-t^*\geq- \frac{t^*}{1-t^*}\geq -2t^*$. In summary, this means that 
		\begin{align*}
			\sum_{j\in N\setminus \{i\}} \frac{\log u_j(x_{t^*}) -\log u_j(x_0)}{t^*}&\geq \sum_{j\in N\setminus \{i\}}-\frac{t^*}{t^*(1-t^*)}\\
			&\geq -2(n-1).
		\end{align*}
		
		We next turn to voter $i$ and note that $\phi_i(t^*)>\phi_i(0)$ since $s_i(t^*)=\frac{\phi_i(t^*)-\phi_i(0)}{t^*}>0$. Further, by the mean value theorem applied to the logarithm, there is for all values $a,b\in\mathbb{R}_{>0}$ with $a<b$ a constant $c\in (a,b)$ such that $\log b - \log a=\frac{b-a}{c}$. Setting $b=\phi_i(t^*)$ and $a=\phi(0)$, this shows that there is a constant $c\in (\phi_i(0),\phi_i(t^*))$ such that $\log\phi_i(t^*)-\log\phi_i(0)=\frac{\phi_i(t^*)-\phi_i(0)}{c}$. Because $c\leq \phi_i(t^*)\leq M$, we have that $ \log\phi_i(t^*)-\log\phi_i(0)\geq \frac{\phi_i(t^*)-\phi_i(0)}{M}$. By dividing by $t^*$ and using the definition of $s_i$, we  get that 
		\[ \frac{\log \phi_i(t^*) -\log \phi_i(0)}{t^*}\geq \frac{\phi_i(t^*)-\phi_i(0)}{t^*M}=\frac{s_i(t^*)}{M}> 2n.\]	
		
		In combination, this proves that $\frac{1}{t^*}\sum_{j\in N} \log u_j(x_{t^*})- \log u_j(x_0)>0$, as desired. This contradicts that $x_0=p$ maximizes the Nash welfare, so there is a constant $M'$ such that $s_i(t)\leq M'$ for all $t\in (0,1]$. Finally, since $s_i$ is non-increasing and bounded,  $\lim_{t\downarrow 0} s_i(t)$ exists and is finite. 
		
		Based on this observation, we  define $z_i=\lim_{t\downarrow 0} s_i(t)$ for all $i\in N$ and note that this value has multiple important implications. Firstly, since all $s_i$ are non-increasing, it holds that $s_i(t)\leq  z_i$ for all $t\in (0,1]$ and thus $\phi_i(t)-\phi(0)\leq t z_i$. This shows that $\phi_i$ is (right) continuous at $0$. Further, by setting $t=1$, we get that 
		$u_i(q)-u_i(p)=u_i(x_1)-u_i(x_0)\leq z_i$.
		
		Now, to derive our separation inequality, we consider the Nash welfare along our line segment and let $\Phi(t)=\sum_{i\in N} \log \phi_i(t)$ for all $t\in [0,1)$. This function is well-defined since concavity and non-negativity imply that $\phi_i(t)\geq (1-t) u_i(x_0)>0$. Further, since $p=x_0$ maximizes the Nash welfare, we have that $\Phi(0)\geq \Phi(t)$ for all $t\in (0,1)$, or, equivalently, that $\frac{\Phi(t)-\Phi(0)}{t}\leq 0$ for all $t\in (0,1)$. This means that $\lim\sup_{t\downarrow 0} \frac{\Phi(t)-\Phi(0)}{t}\leq 0$, too. Motivated by this observation, we will consider the term $\lim\sup_{t\downarrow 0} \frac{\log \phi_i(t)-\log\phi_i(0)}{t}$ for the individual voters $i\in N$. Using again the mean value theorem for the logarithm, there is for every $t\in (0,1)$ a constant $c(t)$ between $\phi_i(t)$ and $\phi_i(0)$ such that $\log \phi_i(t)-\log\phi_i(0)=\frac{\phi_i(t)-\phi_i(0)}{c(t)}$. Further, by right continuity of $\phi_i$ at $0$, it holds that $\phi_i(t)$ converges to $\phi_i(0)$ as $t$ approaches $0$. This implies that $c(t)$ also converges to $\phi_i(0)$ because it lies always between $\phi_i(0)$ and $\phi_i(t)$. Using the definition of $s_i$, we conclude that 
		\begin{align*}
			\lim\sup_{t\downarrow 0} \frac{\log \phi_i(t)-\log\phi_i(0)}{t}
			&=\lim\sup_{t\downarrow 0} \frac{ \phi_i(t)-\phi_i(0)}{t c(t)}\\
			&=\lim\sup_{t\downarrow 0} \frac{ \phi_i(t)-\phi_i(0)}{t \phi(0)} \\
			&= \frac{z_i}{\phi(0)}. 
		\end{align*}
		
		Since this holds for all voters $i\in N$, it follows that 
		\begin{align*}
			\lim\sup_{t\downarrow 0} \frac{\Phi(t)-\Phi(0)}{t}
			&=\lim\sup_{t\downarrow 0} \sum_{i\in N} \frac{\log \phi_i(t)-\log \phi_i(0)}{t}\\
			&=\sum_{i\in N} \frac{z_i}{\phi_i(0)}.
		\end{align*}
		
		Finally, by combining our insights, we derive that 
		\begin{align*}
			\sum_{i\in N} \frac{u_i(q)-u_i(p)}{u_i(p)}
			&=\sum_{i\in N} \frac{\phi_i(1)-\phi_i(0)}{\phi_i(0)}\\
			&\leq \sum_{i\in N} \frac{z_i}{\phi_i(0)}\\
			&=\lim\sup_{t\downarrow 0} \frac{\Phi(t)-\Phi(0)}{t}\\
			&\leq 0.
		\end{align*}	
		This concludes the proof of our claim.\medskip
		
		\textbf{Proof of Claim (3):} For our last claim, we analyze the change in the Nash welfare when a voter $i$ abstains from the election. To this end, let $q=\text{\nash}(U_{-i})$ be the chosen distribution when voter $i$ abstains. Our goal is to show that $e^{-1}\leq \frac{\prod_{j\in N\setminus \{i\}} u_j(p)}{\prod_{j\in N\setminus \{i\}} u_j(q)}\leq 1$. First, we note that $\prod_{j\in N\setminus \{i\}} u_j(q)>0$ and $\prod_{j\in N\setminus \{i\}} u_j(p)>0$ by Claim (1). For our upper bound, it now suffices to observe that, by definition, $q$ maximizes $\prod_{j\in N\setminus \{i\}} u_j(q')$ among all distributions $q'$. Thus, we immediately get that 
		$\frac{\prod_{j\in N\setminus \{i\}} u_j(p)}{\prod_{j\in N\setminus \{i\}}  u_j(q)}\leq 1$.
		
		For the lower bound, we use the separation inequality in Claim (2). By applying this inequality for $q$, we derive that $\sum_{j\in N} \frac{u_j(q)-u_j(p)}{u_j(p)}\leq 0$. Next, we define  for every voter $j\in N$ a constant $d_j$ such that $1+d_j=\frac{u_j(q)}{u_j(p)}$. By the definition of these constants, it holds that $\frac{u_j(q)-u_j(p)}{u_j(p)}=\frac{(1+d_j)u_j(p)-u_j(p)}{u_j(p)}=d_j$ for all $j\in N$. By substituting this into our separation inequality for all voters $j\in N\setminus \{i\}$ and rearranging it, we derive that $\sum_{j\in N\setminus \{i\}} d_j\leq \frac{u_i(p)-u_i(q)}{u_i(p)}$.
		Next, since our utilities are non-negative, the right hand side is lower bounded by $\frac{u_i(p)-u_i(q)}{u_i(p)}\leq \frac{u_i(p)}{u_i(p)}=1$, which implies that $\sum_{j\in N\setminus \{i\}} d_j\leq1$. 
		
		For the next step, we recall Lemma 3.3 of \citet{CGG13a}: for any finite set of numbers $X=\{\delta_1,\dots, \delta_t\}$ such that $\sum_{j\in X} \delta_j\leq b$ for some constant $b$, it holds that $\prod_{j\in X} (1+\delta_i)\leq (1+\frac{b}{t})^t$. Applying this lemma with $X=\{d_j\colon j\in N\setminus\{i\}\}$ and $b=1$ shows that 
		\[\prod_{j\in N\setminus \{i\}} (1+d_j)\leq \left(1+\frac{1}{n-1}\right)^{n-1}.\]\vspace{-0.1cm}
		
		Now, since $\log \left(1+\frac{1}{n-1}\right)^{n-1} = (n-1) \log (1+\frac{1}{n-1})\leq (n-1)/(n-1)=1$, it holds that $\left(1+\frac{1}{n-1}\right)^{n-1}\leq e$. This proves our lower bound because
		\[\frac{\prod_{j\in N\setminus \{i\}} u_j(p)}{\prod_{j\in N\setminus \{i\}} u_j(q)}=\frac{1}{\prod_{j\in N\setminus \{i\}}(1+d_j)}\geq \frac{1}{e}.\]
		This completes the proof of this lemma. 
	\end{proof}

	\section{Detailed Discussion of Experiments}\label{app:experiments}
	
	In this appendix, we further discuss the setup of our experiments and present additional statistics in \Cref{fig:IC,fig:EUC,fig:MALL}. We note that our experiments rely on the PrefSampling package by \citet{BFL+24a}.
	
	First, we note that we focus on relatively small numbers of voters and candidates. This is necessary due to computational issues: to compute the value $IR(f,A)$ for a rule $f$ and a profile $A$, we compute $f$ for the original profile $A$ and each possible deviation of each voter. For example, if $n=100$ and $m=10$, we need to compute $f$ (in the worst-case) $(2^{10}-2)\cdot 100\approx 100,000$ times. This is computationally challenging since, e.g., \egal requires us to repeatedly solve linear programs. Also, in particular the exponential blow-up for $m$ prevents us from running computer experiments for more candidates. A second reason why we focus on small values of $n$ is that, for large $n$, the effect of a single voter becomes more and more negligible and it seems more plausible to reason about group manipulations. Lastly, by using small numbers of voters, we also avoid numerical issues that may arise otherwise as the utilities of voters can become very small if $m$ and $n$ are large. That said, we find more extensive experiments desirable and leave this for future work. 
	
	Next, we discuss additional insights form our experiments for each distribution. 
	
	\paragraph{Impartial culture model.} First, in our impartial culture model, each voter approves each candidate with probability $0.3$. This means that, in expectation, each candidate is approved by the same number of voters (namely $0.3n$) and that there is no correlation between candidates. 
	
	This setup has clear consequences for our rules. Firstly, for \nash and \egal, this means that, as $n$ grows, the corresponding distributions become closer to the uniform distribution. As a consequence, the average incentive ratio of \nash slightly decreases as $n$ increases. For \egal, this effect is not visible in our experiments. We believe the reason for this is that \egal does not converge as smoothly as \nash to the uniform lottery, so a larger number of voters may be necessary. However, for large enough $n$, \egal should return the uniform lottery over all candidates for all our distributions and the percentage of manipulable profiles should decline (because each singleton ballot will be reported by some voter and \egal is not manipulable in such profiles). 
	
	By contrast, \fut and especially \map are very manipulable under the impartial culture model. The reason for this is that these rules assign the shares sequentially to the candidates, and the sequence of candidates is roughly guided by the approval scores. Hence, if all candidates have similar (or even the same) approval scores, it becomes easy to change the sequence in which the candidates are processed, which is necessary for a successful manipulation. Moreover, for \map, there is a clear trend that the manipulability increases as $n$ increases, whereas the average incentive ratio of \fut is relatively constant.
	
	\paragraph{Euclidean Model.}
	In our Euclidean model, we place candidates and voters uniformly at random in the $3$-dimensional unit ball $\{x\in \mathbb{R}^3\colon |x|_2\leq 1\}$. Further, a voter approves each candidate with a distance of at most $0.4$ from her position. We note that this radius is empirically chosen so that voters approve roughly three to four candidates on average. While the approval scores of the candidates are, in expectation, still roughly the same, this model introduces correlation between candidates. Specifically, candidates that are close to each other in the unit ball will be jointly approved together by many voters, whereas candidates that are far apart from each other cannot be jointly approved. 
	
	In general, the incentive ratio of our rules tends to be smaller for this model than for the impartial culture model, which is especially visible or \map. Moreover, we note that \map and \fut are significantly less often manipulable in the Euclidean model than under the impartial culture, which indicates that the additional structure in this model benefits these rules. Further, in this model, the incentive ratio of \map and \egal clearly increases as $n$ increases. For \map, a possible explanation for this is that, the more voters we have, the more likely there is a pair of candidates $x,y$ such that many voters approve both $x$ and $y$ (as they are close to each other in the unit ball) but there are some voters how only approve $x$ and $y$ respectively. \map will then assign a large share to either $x$ or $y$ by the tie-breaking, opening up the possibility for the voters who only approve the other candidate to manipulate by only voting for this candidate. For \egal, we believe that a similar effect happens due to possible free-riding manipulations as the number of voters increases. 
	
	\paragraph{Truncated Mallow's model.} In our last model, we sample first strict preference relation $\succ_i$ according to Mallow's model with a dispersion parameter of $\phi=0.75$ for each voter. Specifically, Mallow's model is defined by a fixed preference relation $\rhd=x_1,\dots,x_m$ over the candidates. The probability to assign a voter a preference relation $\succ_i$ is then proportional to $\phi^{|\{(x,y)\in A^2\colon x\rhd y\cap y\succ_i x\}|}$. As as second step, we sample a threshold value $t_i\in \{1,\dots, m-1\}$ and each voter $i$ approves the $t_i$ favorite alternatives in her preference relation. We empirically chose the dispersion parameter $\phi$ such that the approval score of the most approved candidate is roughly 2.5 times that of the least approved candidate. Further, this model introduces both a correlation between candidates and differences in the approval scores. Specifically, candidates that are close to each other in the ranking $\rhd$ are more likely to be jointly approved and higher-ranked candidates in this ranking obtain more approvals. 
	
	The additional structure in this model helps to further reduce the average incentive ratio, but the number of manipulable profiles remains roughly the same as for the euclidean model. Specifically, for \nash, \fut, and \map, the average and maximum incentive ratio as well as the 90\%-percentile are consistently the lowest observed in our experiments. This is most notable for \map, where the 90\%-percentile of the incentive ratio is now always less or equal to $1.5$. We note that we expected this model to be the least manipulable for \fut and \map: since there is now a clear difference in the approval scores of the candidates, the first rounds of these rules will often be non-manipulable. Since the majority of the probability is assigned in these rounds, we end up with only small manipulations towards the end. Moreover, if we were to use a smaller dispersion parameter, these effects get even more drastic. On the other hand, \egal is agnostic to these effects as approval scores are largely irrelevant for this rule. However, under the truncated Mallow's model, there are frequently candidates that obtain a share of $0$ despite being approved by some voters, which allows for free-riding manipulations.

	\begin{figure*}
		\small
		\tabcolsep=4pt
		\begin{filecontents*}{./simulations/IC.csv}
n;nash_avg;nash_std;nash_max;nash_per90;nash_min;nash_freq;egal_avg;egal_std;egal_max;egal_per90;egal_min;egal_freq;mp_avg;mp_std;mp_max;mp_per90;mp_min;mp_freq;fut_avg;fut_std;fut_max;fut_per90;fut_min;fut_freq
10;1.2043513387397653;0.08982166156979898;1.4542573661681704;1.3193185961261242;1.000002732383368;0.995;1.6426415490772088;0.2101219420328746;2.693877551020404;1.875;1.0000000000000002;0.999;1.3126972222222224;0.35288406238070946;5.0;1.6;1.0;0.785;1.0648322104424097;0.10371810718503655;2.0;1.2000000000000002;1.0;0.455
20;1.2016548310814092;0.06839359928478111;1.479054537964556;1.291068086330517;1.039357783782492;1.0;1.8267697083219914;0.25503910126057333;2.9333333333333345;2.1428571428571432;1.3333333333333337;1.0;1.4863101509386725;0.5938611968213623;6.000000000000001;1.777777777777778;1.0;0.951;1.128614128523579;0.1625268396116528;3.0000000000000004;1.333333333333333;1.0;0.739
30;1.1895557261717804;0.06513729605876019;1.4324568665713515;1.27600572585283;1.0229362851180088;1.0;1.861587127735259;0.2494981134974716;3.085714285714287;2.2000000000000006;1.3333333333333344;1.0;1.6889485899773864;1.0378740547973937;10.000000000000002;2.333333333333333;1.0;0.966;1.159652063488571;0.21932292901071962;4.0;1.3333333333333333;1.0;0.815
40;1.1756803815634882;0.05837471131191782;1.409030917845413;1.2551916385403188;1.0228881059213526;1.0;1.8567395734774659;0.2513655186161294;3.6111111111111103;2.1875;1.3333333333333333;1.0;1.8729259582058388;1.5147248670484181;15.0;2.9999999999999996;1.0000000000000002;0.979;1.1855186645694338;0.21001654095963337;3.0000000000000004;1.4285714285714284;1.0;0.851
50;1.1671996639165731;0.05642935731452605;1.3650360905986054;1.240675201574824;1.0229789070964217;1.0;1.8635440354133488;0.2717089764998427;3.230769230769231;2.25;1.2000000000000008;1.0;2.0266834326806853;2.019621744199553;24.0;3.0;1.0;0.989;1.1874950528561508;0.20071391424654816;3.0;1.4166666666666667;1.0;0.896
60;1.1607459357387468;0.05649825602747604;1.3817112208056126;1.2387678017599908;1.0122937288177687;1.0;1.8365613427682939;0.2577882683480119;3.1428571428571437;2.1875;1.000000000000001;0.998;2.1057656026178213;1.9485020365901577;21.0;4.0;1.0000000000000002;0.993;1.1934841366328424;0.2010721879408125;3.0;1.4;1.0;0.906
70;1.145895757160761;0.06086544721387877;1.3863679734353926;1.2253683511698386;1.0097281074409568;1.0;1.8175469651454905;0.25690581251546996;3.000000000000001;2.1875;1.000000000000001;0.998;2.252484566678032;2.3027869671736565;19.0;4.0;1.0000000000000002;0.984;1.1739326616936852;0.17891387335098477;3.0;1.3333333333333337;1.0;0.885
80;1.1356782337958298;0.0600086555850233;1.3548156123162445;1.2114438021601943;1.0121430723963225;1.0;1.7794320255857852;0.27153990306986536;2.8000000000000003;2.1875;1.0000000000000009;0.99;2.3851852999435277;2.608885803393238;25.0;5.0;1.0000000000000002;0.994;1.1790444552308315;0.16963215148663743;2.5;1.3333333333333337;1.0;0.927
90;1.1282993464783735;0.057535223225278806;1.3133563368495575;1.201836677902163;1.0122031063200043;1.0;1.7595375218984224;0.27473625402435764;2.785714285714287;2.1875;1.0000000000000009;0.983;2.4964150936323577;2.918556890408279;37.0;4.999999999999999;1.0000000000000002;0.994;1.1716385945570182;0.16449110364832048;2.5999999999999996;1.3333333333333337;1.0;0.917
100;1.1216752890595678;0.061279290316656;1.3183974095418285;1.203395989499315;1.0081023412070096;1.0;1.7351233876896177;0.28456743367768833;2.799999999999999;2.1875;1.0000000000000009;0.967;2.522792714913917;2.8209921067861328;26.0;5.333333333333334;1.0000000000000002;0.992;1.1710679698253885;0.1696912014680412;3.0;1.3333333333333337;1.0000000000000002;0.921
\end{filecontents*}
\pgfplotstabletypeset[
		col sep=semicolon,
		columns={n,nash_avg,nash_max,nash_per90,nash_std,nash_freq,
			egal_avg,egal_max,egal_per90,egal_std,egal_freq,
			fut_avg,fut_max,fut_per90,fut_std,fut_freq,
			mp_avg,mp_max,mp_per90,mp_std,mp_freq},
		every column/.style={
			sci=false,           
			fixed,               
			fixed zerofill,      
			precision=2          
		},
		columns/n/.style={fixed,               
			fixed zerofill,      
			precision=0},
		columns/nash_avg/.style={column type=|c},
		columns/egal_avg/.style={column type=|c},   
		columns/fut_avg/.style={column type=|c},  
		columns/mp_avg/.style={column type=|c},
		every head row/.style={
			output empty row,         
			before row={
				\toprule
				& \multicolumn{5}{c|}{\nash} & \multicolumn{5}{c|}{\egal} & \multicolumn{5}{c|}{\fut} & \multicolumn{5}{c}{\map}\\
				n & Avg & Max & 0.9P & Std & Freq & Avg & Max & 0.9P & Std & Freq & Avg & Max & 0.9P & Std & Freq & Avg & Max & 0.9P & Std & Freq \\ 
				\midrule
			},
		},
		every last row/.style={after row=\bottomrule}
		]{./simulations/IC.csv}
		\caption{Full summary of our simulations with the impartial culture model. Avg, Max, 0.9P, and Std respectively indicate the average, maximum, 90\%-percentile, and standard deviation of the incentive ratio. Freq denotes the percentage of manipulable profiles.}
		\label{fig:IC}
	\end{figure*}
	
	\begin{figure*}
		\small
		\tabcolsep=4pt
		\begin{filecontents*}{./simulations/EUC.csv}
n;nash_avg;nash_std;nash_max;nash_per90;nash_min;nash_freq;egal_avg;egal_std;egal_max;egal_per90;egal_min;egal_freq;mp_avg;mp_std;mp_max;mp_per90;mp_min;mp_freq;fut_avg;fut_std;fut_max;fut_per90;fut_min;fut_freq
10;1.1472146448760945;0.11379683149232497;1.5666358515130203;1.303886545710816;1.0000000000000004;0.891;1.4980958994708995;0.15117249668404553;2.25;1.6000000000000003;1.0;0.988;1.2160321428571428;0.27457290485796637;2.9999999999999996;1.5000000000000002;1.0;0.644;1.046684670363563;0.08919616084287597;2.0;1.1666666666666663;1.0;0.34
20;1.1988246676848362;0.10803380058248546;1.5647874956784753;1.345766618753078;1.0000039059550938;0.999;1.623471947496948;0.14686957359855554;2.4000000000000004;1.777777777777778;1.0000000000000004;0.997;1.3562400823634568;0.5662226466367106;6.999999999999999;1.6249999999999998;1.0;0.868;1.1072270640048274;0.15789640399957394;3.0000000000000004;1.25;1.0;0.631
30;1.2018988559284904;0.10274786558337781;1.5671822641208295;1.3431428299700925;1.0022462979311797;1.0;1.6586117629592636;0.14543404911402977;2.666666666666668;1.8000000000000005;1.0000000000000004;0.999;1.512887219241239;1.0080157486910135;10.0;2.0;1.0;0.904;1.122214604889376;0.15453702671347894;2.0;1.3;1.0;0.684
40;1.1932866104190154;0.0972451768863932;1.5860819967746964;1.33909803567279;1.0111003584241005;1.0;1.6910181569819074;0.15163408023482355;2.666666666666667;1.8333333333333337;1.200000000000001;1.0;1.5307000406565936;0.9604426166625744;11.0;2.0;1.0;0.913;1.1368532206408932;0.18928600245789212;3.0000000000000004;1.3333333333333333;1.0;0.714
50;1.1969450680504343;0.09852231458713505;1.573016729998305;1.3314986941397535;1.017709985357458;1.0;1.706718087468088;0.14638523724746202;2.777777777777777;1.833333333333334;1.3124999999999996;1.0;1.665476164751282;1.4024330094479773;18.0;2.5;1.0;0.913;1.1532935794189414;0.20300133308489204;3.0;1.3529411764705883;1.0;0.76
60;1.1874912062435934;0.10216594757176972;1.584501092669836;1.33035477322295;1.008125972697611;1.0;1.7157568986568992;0.15925003897666865;2.7857142857142865;1.875;1.0000000000000004;0.999;1.7256048110681343;1.7497194877528015;18.0;2.5;1.0;0.925;1.1473679025292072;0.22786037665679265;5.0;1.3333333333333333;1.0;0.771
70;1.183713381123412;0.09662299595729025;1.634578839088172;1.303994454272001;1.0109248146285883;1.0;1.723973537500512;0.15092709645769176;2.5000000000000004;1.875000000000001;1.000000000000005;0.999;1.777609235747577;1.9979895069814906;23.0;2.5;1.0;0.909;1.144793841608468;0.19304009648840093;3.0;1.3571428571428572;1.0;0.765
80;1.17964130032321;0.10088529758797332;1.5944339092874154;1.3134623077458842;1.010502308799751;1.0;1.731984770308668;0.16964667998447874;2.571428571428574;1.9285714285714304;1.0000000000000016;0.999;1.7099188903835048;1.70897866640851;22.0;2.666666666666667;1.0;0.898;1.1456997822036934;0.19751123497245876;3.5;1.3636363636363635;1.0;0.733
90;1.176720175310466;0.10061128355839649;1.6373070196351402;1.3056838397542427;1.0117217031121728;1.0;1.735377924390425;0.156569861070031;2.5714285714285716;1.885714285714287;1.3124999999999996;1.0;1.7906332699871337;2.02892909751951;25.999999999999996;2.727272727272727;1.0;0.881;1.1430535190070168;0.2808384335295393;7.0;1.3333333333333337;1.0;0.733
100;1.1750844112570762;0.09882594846183189;1.5957891960410868;1.311441494675345;1.0081564659875286;1.0;1.7362143238705743;0.15944619189429862;2.57142857142857;1.857142857142857;1.3333333333333333;1.0;1.7872557802410396;2.306614495165255;30.0;2.5;1.0;0.891;1.1426279905654428;0.20589368661657276;3.0;1.352941176470588;1.0;0.752
\end{filecontents*}
\pgfplotstabletypeset[
		col sep=semicolon,
		columns={n,nash_avg,nash_max,nash_per90,nash_std,nash_freq,
			egal_avg,egal_max,egal_per90,egal_std,egal_freq,
			fut_avg,fut_max,fut_per90,fut_std,fut_freq,
			mp_avg,mp_max,mp_per90,mp_std,mp_freq},
		every column/.style={
			sci=false,           
			fixed,               
			fixed zerofill,      
			precision=2          
		},
		columns/n/.style={fixed,               
			fixed zerofill,      
			precision=0},
		columns/nash_avg/.style={column type=|c},
		columns/egal_avg/.style={column type=|c},   
		columns/fut_avg/.style={column type=|c},  
		columns/mp_avg/.style={column type=|c},
		every head row/.style={
			output empty row,         
			before row={
				\toprule
				& \multicolumn{5}{c|}{\nash} & \multicolumn{5}{c|}{\egal} & \multicolumn{5}{c|}{\fut} & \multicolumn{5}{c}{\map}\\
				n & Avg & Max & 0.9P & Std & Freq & Avg & Max & 0.9P & Std & Freq & Avg & Max & 0.9P & Std & Freq & Avg & Max & 0.9P & Std & Freq \\ 
				\midrule
			},
		},
		every last row/.style={after row=\bottomrule}
		]{./simulations/EUC.csv}
		\caption{Full summary of our simulations with the euclidean model. Avg, Max, 0.9P, and Std respectively indicate the average, maximum, 90\%-percentile, and standard deviation of the incentive ratio. Freq denotes the percentage of manipulable profiles.}
		\label{fig:EUC}
	\end{figure*}
	
	\begin{figure*}
		\small
		\tabcolsep=4pt
		\begin{filecontents*}{./simulations/MALL.csv}
n;nash_avg;nash_std;nash_max;nash_per90;nash_min;nash_freq;egal_avg;egal_std;egal_max;egal_per90;egal_min;egal_freq;mp_avg;mp_std;mp_max;mp_per90;mp_min;mp_freq;fut_avg;fut_std;fut_max;fut_per90;fut_min;fut_freq
10;1.182434783517996;0.11287037468925419;1.603530618296375;1.3263240314306053;1.0000000932534538;0.888;1.4974730260480265;0.18858781221871376;2.5714285714285716;1.7142857142857149;1.0;0.978;1.1498670634920636;0.18959010422117878;2.9999999999999996;1.3333333333333337;1.0;0.617;1.0323621430187986;0.07697327304382862;2.0;1.1111111111111112;1.0;0.314
20;1.1799293633237995;0.08357301352479345;1.520812363332649;1.2920817405492706;1.0006744977829178;0.999;1.6425119196554254;0.19608005462083253;3.062500000000003;1.8750000000000004;1.2000000000000002;1.0;1.2152144836630518;0.3067718863829331;5.0;1.4;1.0;0.823;1.0649199833203715;0.11396641769811405;2.0;1.1666666666666667;1.0;0.55
30;1.165495868774083;0.07920149855910522;1.4989818280775549;1.266924156818;1.0109299896432105;1.0;1.7129797194952419;0.21022901188065252;3.0000000000000004;2.0000000000000004;1.25;1.0;1.2568341756107757;0.4272954850316999;7.0;1.4999999999999998;1.0;0.883;1.0887366023777607;0.14864031839031822;2.0;1.2000000000000002;1.0;0.637
40;1.1529386302908715;0.08103943772730823;1.4378394709780318;1.2605572964298433;1.010239530542674;1.0;1.72516636819386;0.1993334183647043;2.666666666666667;2.000000000000001;1.2250000000000003;1.0;1.285118222837052;0.5542122978413307;11.0;1.5;1.0;0.89;1.0912451974005326;0.13316573730246578;2.0;1.25;1.0;0.679
50;1.1409531359394938;0.0839547061280852;1.451153845639129;1.258284127180688;1.0073602594956477;1.0;1.7193355019350594;0.17304944675629008;2.571428571428572;2.0;1.000000000000001;0.999;1.3473940700008529;0.8556270145045157;12.0;1.5;1.0;0.911;1.1018333755960872;0.15716553490657287;2.5;1.25;1.0;0.684
60;1.1315642124273106;0.08437793788591486;1.4256601341011887;1.2543290835569008;1.0046921562146216;1.0;1.735389503350248;0.17540740468601637;2.7600000000000007;2.0;1.1250000000000009;1.0;1.277467572959953;0.4922772854034016;7.0;1.5;1.0;0.901;1.093682460267495;0.1279694826276359;2.0;1.25;1.0;0.684
70;1.1228851539414677;0.08152008351421419;1.3936915307283992;1.2420945593536918;1.0068158660074482;1.0;1.7383958923559288;0.1657926739916441;2.7777777777777777;1.928571428571428;1.0000000000000004;0.998;1.2862629267805175;0.4913005068017557;8.0;1.5;1.0;0.912;1.1021770381254743;0.14258202822959926;3.0;1.25;1.0;0.698
80;1.1220446607721215;0.08844068946206123;1.4374226696841186;1.2495705075451686;1.0060121410503249;1.0;1.7284930868287494;0.15835367128589342;2.571428571428571;1.8000000000000005;1.0000000000000009;0.996;1.274895284585569;0.49221016952143476;10.0;1.5;1.0;0.912;1.1010055088007376;0.13725751708858744;2.333333333333333;1.25;1.0;0.699
90;1.1191720083898353;0.09246007534662609;1.4253214655441324;1.2508267687167354;1.0076251629246182;1.0;1.7397954787247072;0.16551484606329694;2.3571428571428608;1.800000000000004;1.0000000000000009;0.997;1.284598525403677;0.5351411909392136;11.0;1.5;1.0;0.934;1.1054259206884516;0.1347991642510282;2.0;1.25;1.0;0.715
100;1.1112759312262765;0.08993324120139183;1.457262418439474;1.239685309068923;1.0067389983850792;1.0;1.7306625840414178;0.17980436959876436;2.400000000000001;1.8000000000000005;1.0000000000000009;0.988;1.290829498307249;0.599753229044262;11.5;1.5;1.0;0.901;1.0962585900349355;0.11722830831083457;2.0;1.25;1.0;0.702
\end{filecontents*}
\pgfplotstabletypeset[
		col sep=semicolon,
		columns={n,nash_avg,nash_max,nash_per90,nash_std,nash_freq,
			egal_avg,egal_max,egal_per90,egal_std,egal_freq,
			fut_avg,fut_max,fut_per90,fut_std,fut_freq,
			mp_avg,mp_max,mp_per90,mp_std,mp_freq},
		every column/.style={
			sci=false,           
			fixed,               
			fixed zerofill,      
			precision=2          
		},
		columns/n/.style={fixed,               
			fixed zerofill,      
			precision=0},
		columns/nash_avg/.style={column type=|c},
		columns/egal_avg/.style={column type=|c},   
		columns/fut_avg/.style={column type=|c},  
		columns/mp_avg/.style={column type=|c},
		every head row/.style={
			output empty row,         
			before row={
				\toprule
				& \multicolumn{5}{c|}{\nash} & \multicolumn{5}{c|}{\egal} & \multicolumn{5}{c|}{\fut} & \multicolumn{5}{c}{\map}\\
				n & Avg & Max & 0.9P & Std & Freq & Avg & Max & 0.9P & Std & Freq & Avg & Max & 0.9P & Std & Freq & Avg & Max & 0.9P & Std & Freq \\ 
				\midrule
			},
		},
		every last row/.style={after row=\bottomrule}
		]{./simulations/MALL.csv}
		\caption{Full summary of our simulations with the truncated Mallow's model. Avg, Max, 0.9P, and Std respectively indicate the average, maximum, 90\%-percentile, and standard deviation of the incentive ratio. Freq denotes the percentage of manipulable profiles.}
		\label{fig:MALL}
	\end{figure*}
	
\end{document}